\title[resonant embedded eigenvalues]{One dimensional discrete Schr\"odinger operators with resonant embedded   eigenvalues }
\author{Wencai Liu}
\address{ Department of Mathematics, Texas A\&M University, College Station, TX 77843-3368, USA}
\email{liuwencai1226@gmail.com; wencail@tamu.edu}
\author{Kang Lyu}
\address{ Department of Applied Mathematics, Nanjing University of Science and Technology, Nanjing, 210094, Jiangsu, People’s Republic of China}
\email{lvkang201905@outlook.com}
\newcommand{\abs}[1]{\left\lvert #1 \right\rvert}
\theoremstyle{plain}
\newtheorem{theorem}{Theorem}[section]
\newtheorem{lemma}[theorem]{Lemma}
\newtheorem{proposition}[theorem]{Proposition}
\newcommand{\R}{\mathbb{R}}
\newcommand{\Z}{\mathbb{Z}}
\newcommand{\N}{\mathbb{N}}
\theoremstyle{definition}
\begin{document}


	\maketitle

\ \ \ \ {\itshape 	Dedicated to the memory of Sergey Nikolaevich Naboko (1950--2020)}
	\begin{abstract}
In this paper, we introduce a new family of functions to construct Schr\"odinger operators with embedded eigenvalues.
This particularly  allows   us to construct discrete Schr\"odinger operators with arbitrary prescribed sets of eigenvalues. 

\end{abstract}
\section{Introduction}
In this paper, we study  the discrete Schr\"odinger operator,
\begin{align}
Hu(n)=(\Delta+V)u(n)=u(n+1)+u(n-1)+V(n)u(n),\nonumber
\end{align}
and the eigen-equation $Hu=Eu$, where    $\Delta$   is the free discrete Schr\"odinger operator, and $V$ is the potential.

For simplicity, we only consider the eigen-equation on  the half line,  
\begin{align}\label{mainequation}
Hu(n)=u(n+1)+u(n-1)+V(n)u(n)=Eu(n),\ \ n\geq1,
\end{align}
with the boundary condition
\begin{align}\label{boundarycondition}
\frac{u(1)}{u(0)}=\tan\theta,\theta\in [0,\pi).
\end{align}

By  Weyl's criterion for the essential
	spectrum, when $V(n)=o(1)$ as $n\to \infty$,  the essential spectrum of $H$ $$\sigma_{ess}(H)=\sigma_{ess}(\Delta)=[-2,2].$$
We are interested in constructions of potentials $V$ such that $H=\Delta+V$ have 
eigenvalues embedded into the essential spectrum  $[-2,2]$.

The study of problems of embedded eigenvalues has a long history. The classical Wigner-von Neumann functions
$
V_k(x)=\frac{K}{1+x}\sin(kx+\phi)
$ could provide one embedded eigenvalue. Naboko~\cite{Na86} and Simon~\cite{simondense} constructed continuous Schr\"odinger operators with countably many (dense) eigenvalues. Recently there are remarkable developments in the area of embedded eigenvalues for both continuous and discrete Schr\"odinger operators  ~\cite{sim07,liupafa,kru,liu21,lotoreichik2014spectral,lukwn,luk14,lukd1,KRS,sim12,remling1998absolutely,jnw1,jnw2,jnw3}.

The previous results about perturbed discrete Schr\"odinger operators can be 
summarized as follows.
For any $E\in(-2,2)$, let $E=2\cos \pi k(E)$ with $k(E)\in(0,1)$ ($k(E)$ is referred to as quasimomentum). 
Naboko and Yakovlev ~\cite{naboko1992point} constructed potentials $V$ such that $\Delta+V$ have any given eigenvalues that their quasimomenta are rationally independent (see ~\cite{Na86,KRS,simondense} for continuous cases and ~\cite{napu,L19stark} for Stark type potentials). In ~\cite{JL19}, Jitomirskaya and Liu introduced the piecewise constructions and gluing techniques to construct embedded eigenvalues of Laplacians on non-compact manifolds. Those techniques have been further developed by Liu and Ong in ~\cite{LO17}. In particular, they constructed potentials of any prescribed embedded eigenvalues when their quasimomenta are non-resonant, namely $k(E)+k(\tilde{E})\neq 1$ (in other words, $E\neq -\tilde{E}$). See ~\cite{l1} for an alternative construction based on explicit Wigner-von Neumann type functions.

It is natural to ask if we could relax the non-resonance assumption in ~\cite{l1,LO17}, namely construct discrete Schr\"odinger operators with arbitrary prescribed sets of eigenvalues. 

When $k(E)=1-\tilde{k}(E)$, their corresponding Wigner-von Neumann functions $V_k(x)=\frac{K}{1+x}\sin(kx+\phi) $ and $V_{\tilde k}(x)=\frac{K}{1+x}\sin(\tilde k x+\phi) $ are the same up to a phase $\phi$. This is a 
big obstacle to construct potentials such that both $E$ and $\tilde{E}$ are eigenvalues. We also want to mention that non-resonant conditions appear naturally in the study of Schr\"odinger operators with Wigner-von Neumann type or bounded variation potentials ~\cite{lukd1,JanasSimonov}.

The main goal of this paper is to construct arbitrary embedded eigenvalues (without non-resonance restriction).
The proof (contains two steps) follows the general scheme of Jitomirskaya and Liu ~\cite{JL19}. The first step is to construct a ``nice" function (we call it a generating function) attached to one chosen eigenvalue $E$. The second step is to glue all pieces together by adding eigenvalues slowly (the idea is inspired by Naboko ~\cite{Na86} and Simon ~\cite{simondense}). 
With the recent development ~\cite{LO17,L19stark,l1}, the second step is rather standard.
So following the plan of Jitomirskaya and Liu ~\cite{JL19}, the difficulty lies in finding the generating function. 
The Wigner-von Neumann function $V_k(x)=\frac{K}{1+x}\sin(kx+\phi)
$ ($k$ is the quasimomentum of $E$) is a natural candidate, which has been used to do the constructions in ~\cite{JL19} and ~\cite{l1}. 
In order to study perturbed periodic operators, inspired by ~\cite{KRS}, Liu and Ong ~\cite{LO17} chose the generating function by solving an equation involving  Pr\"ufer angles and potentials. Both generating functions fail in the case of resonant eigenvalues, since 
the generating function attached to $E$ completely breaks  the oscillatory estimate to the eigen-equation $Hu=\tilde{E} u$ (for example, see Lemma \ref{keyle}).

In the present work, we propose a new family of functions as our generating functions.
There are two novelties in our constructions of generating functions. We first obtain a function $\tilde V_k+\tilde V_{\tilde{k}}$ attached to a pair of resonant eigenvalues $E$ and $\tilde{E}$ by solving a system of two equations. Both functions $\tilde V_k$ and $\tilde V_{\tilde{k}}$ have similar properties of (but not) Wigner-von Neumann functions. However, $\tilde V_k+\tilde V_{\tilde{k}}$ can not directly serve as a generating function since the Pr\"ufer angle $\theta_{\tilde{k}}(n)+{\theta}_k(n)$ may stay fixed up to an integer by the fact that $k+\tilde{k}=1$.
In order to overcome the difficulty, 
we modify the function $\tilde V_k+\tilde V_{\tilde{k}}$ by  adding an extra function $\frac{1}{1+n}$. In other words, our generating functions have the format $\tilde V_k+\tilde V_{\tilde{k}}+\frac{1}{1+n}$. The additional function $\frac{1}{1+n}$ will play a significant role in our constructions. We believe the novel family of functions $\tilde V_k+\tilde V_{\tilde{k}}+\frac{1}{1+n}$ introduced in this paper will have wider applications, in particular problems of perturbed periodic operators with resonant embedded eigenvalues. 
	
	\begin{theorem}\label{theorem1}
		
		 Suppose   $\{E_j\}_{j=1}^m $  are    a finite set of distinct points in  $(-2,2)$.
Then for any given  $\{\theta_j\}_{j=1}^m\subset[0,\pi)$, there exist potentials $V(n)=\frac{O(1)}{1+n}$ such that for any $j=1,2,\cdots,m$, the eigen-equation $Hu=(\Delta+V)u=E_ju$ has a solution $u\in l^2(\mathbb{N})$ satisfying the boundary condition
		\begin{align}
			\frac{u(1)}{u(0)}=\tan\theta_j.\nonumber
		\end{align}
	\end{theorem}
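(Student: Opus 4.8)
The plan is to follow the two-step scheme described in the introduction: first produce, for a single eigenvalue $E\in(-2,2)$ with quasimomentum $k=k(E)$, a \emph{generating function} supported on a dyadic-type block $[N_j, N_{j+1})$ that kills the $l^2$-tail and reproduces the prescribed boundary data, and then glue countably (here finitely) many such blocks together so that all the $E_j$ simultaneously become eigenvalues. The novelty forced on us by the possible resonance $k(E_i)+k(E_\ell)=1$ is that for a resonant pair $E,\tilde E$ the two Wigner--von Neumann candidates $\frac{K}{1+n}\sin(k\pi n+\phi)$ and $\frac{K}{1+n}\sin(\tilde k\pi n+\phi)$ coincide up to a phase, so a single block cannot be steered independently for both. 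Instead I would, on each block, work with a generating potential of the form $\tilde V_k+\tilde V_{\tilde k}+\frac{c}{1+n}$, where $\tilde V_k,\tilde V_{\tilde k}$ are the Wigner--von Neumann-like functions obtained by solving the $2\times2$ linear system coming from the Pr\"ufer-variable equations for $E$ and $\tilde E$, and $\frac{c}{1+n}$ is the extra term (with $c$ to be chosen) that breaks the degeneracy in the combined Pr\"ufer angle $\theta_k(n)+\theta_{\tilde k}(n)$.

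Concretely, I would set up Pr\"ufer variables $(R_k(n),\theta_k(n))$ for each $E_j$: write the transfer-matrix solution in amplitude--phase form, derive the standard recursions $\log R_k(n+1)-\log R_k(n)=\frac{V(n)}{2\sin\pi k}\bigl(\text{oscillatory terms}\bigr)+O(V(n)^2)$ and a companion recursion for $\theta_k$, and then average. For a non-resonant collection the oscillatory terms at different $E_j$ decouple and one recovers the existing construction of \cite{LO17,l1}. For a resonant pair the problematic secular term is the one that is common to $E$ and $\tilde E$; feeding in $\tilde V_k+\tilde V_{\tilde k}$ chosen so that the two amplitude equations are solved simultaneously handles the $l^2$ decay (one gets $R_k(n)\sim n^{-\gamma_j}$ with $\gamma_j>1/2$ on the block), but the phase variable $\theta_k+\theta_{\tilde k}$ for the resonant pair receives essentially no net rotation, which is exactly what prevents matching an arbitrary boundary condition. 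Adding $\frac{c}{1+n}$ injects a controllable logarithmic drift into that combined phase without spoiling the amplitude decay, and one solves for $c$ (together with the phases $\phi$) to hit the prescribed $\tan\theta_j$ at the right-hand endpoint of the block. Then, exactly as in \cite{JL19}, one glues: choose the block lengths $N_{j+1}/N_j\to\infty$ fast enough that interactions between blocks and the accumulated errors are summable, turning off all but the ``active'' eigenvalue's generating potential on each block and patching the solutions of all the $Hu=E_\ell u$ across the seam by a bounded change of Pr\"ufer data, so that every $u_\ell$ ends up in $l^2(\mathbb N)$ with the correct boundary value and $V(n)=O(1/n)$.

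The main obstacle is the construction and analysis of the generating function for a resonant pair: one must verify that the $2\times2$ system defining $\tilde V_k,\tilde V_{\tilde k}$ is nondegenerate (its determinant being bounded away from $0$, which is where $k+\tilde k=1$ and $k\ne\tilde k$ enter), that the resulting $\tilde V_k,\tilde V_{\tilde k}$ are genuinely $O(1/n)$ and enjoy the cancellation properties needed so the $O(V^2)$ and cross terms don't reintroduce a secular growth, and — the delicate point — that after adding $\frac{c}{1+n}$ the combined Pr\"ufer phase really does move by a prescribed amount \emph{uniformly} in the block while the amplitude still decays faster than $n^{-1/2}$. This is presumably the content of the lemma referenced as Lemma~\ref{keyle} in the excerpt, and it is where the bulk of the work lies; once it is in hand, the gluing step is, as the authors say, standard. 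A secondary technical point is bookkeeping the case distinctions (isolated non-resonant $E_j$, a resonant pair, and the degenerate self-resonant value $E=0$ with $k=1/2$) so that the blocks can be processed one eigenvalue — or one resonant pair — at a time.
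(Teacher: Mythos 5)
Your high-level picture — Prüfer variables, one generating function per block, a resonant pair $E,\tilde E=-E$ handled together with a potential of the form $\tilde V_k+\tilde V_{\tilde k}+\frac{c}{1+n}$, then gluing à la Jitomirskaya--Liu — is the approach the paper takes, and you have correctly identified the obstruction: for a resonant pair the combined Prüfer angle $\varphi(n)=\theta(n)+\tilde\theta(n)$ would, without the extra $\frac{c}{1+n}$, stay fixed modulo $\mathbb Z$ because $k+\tilde k=1$.

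However, the \emph{role} you assign to the term $\frac{c}{1+n}$ is wrong, and this is precisely the delicate point you flag. You propose to "solve for $c$ (together with the phases $\phi$) to hit the prescribed $\tan\theta_j$ at the right-hand endpoint of the block." In the actual argument the boundary condition $u(1)/u(0)=\tan\theta_j$ is satisfied trivially, by \emph{choosing} the initial Prüfer angle $\theta(n_0)=\theta_0$; nothing downstream needs to be tuned to match it, and nothing has to be hit "at the right-hand endpoint of the block." The constant $c$ is fixed (the paper uses $100K_1$) and serves a completely different purpose: it guarantees that $\varphi(n)$ drifts monotonically by an amount $\asymp\frac{c}{n}$ per step, so that after averaging over a window of length $N$ the quantity $1-\cos 2\pi\varphi(n)$ is bounded below by $1$ on a definite fraction of windows. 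Plugging this into the amplitude recursion $\ln R(n+N)^2-\ln R(n)^2=-\frac{K_1 N}{2(n-b)\sin\pi k}\bigl(1-\cos2\pi\varphi(n)+\delta(n)\bigr)$, one then sums over windows to get $R(n)\lesssim (n-b)^{-100}$, and by symmetry the same for $\tilde R(n)$. Without the drift, $\varphi$ could be stuck near $0$ mod $\mathbb Z$ and $R$ would not decay at all. This dichotomy — drift forces frequent passage through the region where the decrement is large — is the content of the technical Lemma~\ref{zhongyaoyinli} in the paper, which your proposal does not mention; Lemma~\ref{keyle}, which you cite as the heart of the matter, plays the subsidiary role of bounding the oscillatory sums so the Prüfer radii $R_j$ of the \emph{inactive} eigenvalues only grow by a bounded factor $\le 2$ on each block. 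If you tried to execute your plan as written — treating $c$ as a free parameter to be solved for boundary data — you would find nothing to solve, and you would still be missing a proof that $R(n)\to 0$ fast enough for the solution to be in $\ell^2$, which is the crux.

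A secondary remark: what you call "solving the $2\times 2$ linear system" is not a linear solve in the paper. The potential $V(n)$ is defined \emph{implicitly} as a function of the current Prüfer angles, $V(n)=K_1\frac{\sin2\pi\theta(n)+\sin2\pi\tilde\theta(n)+100}{n-b}$, and the two cotangent recursions for $\theta$ and $\tilde\theta$ are then run forward with this $V$. This closed loop is what simultaneously controls the amplitudes for both $E$ and $\tilde E$ on the same block; there is no nondegeneracy/determinant condition to check. The degenerate self-resonant case $E=0$ ($k=1/2$) is handled by the same scheme with the $\tilde\theta$-equation dropped, which you anticipated correctly.
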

	
	\begin{theorem}\label{theorem2}
		Suppose $ \{E_j\}_{j=1}^{\infty}$  are   a countable set of distinct points in  $(-2,2)$.
Then for any given   $\{\theta_j\}_{j=1}^{\infty}\subset[0,\pi)$ and any non-decreasing positive function $h(n)$ with $\lim_{n\to\infty}h(n)=\infty$,
		there exist potentials $V$ satisfying
		$$\abs{V(n)}\leq\frac{h(n)}{1+n},$$
		such that for any $j=1,2,\cdots$, the eigen-equation $Hu=(\Delta+V)u=E_ju$ has a solution $u\in l^2(\mathbb{N})$ satisfying the boundary condition
		\begin{align}
			\frac{u(1)}{u(0)}=\tan\theta_j.\nonumber
		\end{align}
	\end{theorem}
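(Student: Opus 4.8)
The plan is to reduce Theorem \ref{theorem2} to the same two-step scheme announced in the introduction, pushing all the analytic content into the construction of a single \emph{generating function} attached to one eigenvalue, and then gluing countably many pieces together by a slow-addition argument. Concretely, I would first isolate and prove a \textbf{generating lemma}: given $E\in(-2,2)$ with quasimomentum $k$, a phase $\theta$, a length scale $N$, and a small parameter $\varepsilon>0$, one can prescribe a potential on a block $[N, N']$ (with $N'$ depending on the data) of the form $\tilde V_k+\tilde V_{\tilde k}+\frac{1}{1+n}$ (truncated/damped outside the block) of size $O(\varepsilon)/(1+n)$ such that the transfer matrices at energy $E$ over the block rotate the Pr\"ufer angle into the prescribed phase while contracting the Pr\"ufer radius, and simultaneously at every other energy $\tilde E\ne E$ in play the Pr\"ufer radius changes only by a factor $1+O(\varepsilon^2)$ (this is where the extra $\frac{1}{1+n}$ term and Lemma \ref{keyle} do the work — it kills the resonant secular growth that the bare Wigner--von Neumann term $\tilde V_k+\tilde V_{\tilde k}$ would otherwise produce when $k+\tilde k=1$). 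The delicate cancellation here is the quadratic (second-order) term in the Pr\"ufer analysis: the linear term averages out by oscillation, but the resonant quadratic term is precisely what forces the correction, and the main estimate is that $\tilde V_k+\tilde V_{\tilde k}+\frac{1}{1+n}$ makes the relevant quadratic form negative (or at least non-growing) at $\tilde E$ while still decaying the solution at $E$.

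Second, I would set up the \textbf{gluing/slow-addition scheme} following Naboko--Simon and the refinements in \cite{LO17,l1}. Enumerate the eigenvalues $E_1,E_2,\dots$ and, on the $s$-th stage, work on a window $[N_s, N_{s+1}]$: apply the generating lemma for the eigenvalue $E_{j(s)}$ currently being ``advanced,'' with a parameter $\varepsilon_s\to 0$ chosen so slowly that (i) $\sum_s \varepsilon_s^2<\infty$ for fixed other energies, guaranteeing the Pr\"ufer radii at all $E_i$, $i\ne j(s)$, converge (so those solutions are neither killed nor blown up prematurely), and (ii) each $E_j$ gets its turn infinitely often with a geometric amount of decay each time, so that the corresponding Pr\"ufer radius at $E_j$ tends to $0$, i.e. the solution is in $\ell^2$. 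A standard diagonal interleaving of the indices $j(s)$ does this. The bound $|V(n)|\le h(n)/(1+n)$ with $h\uparrow\infty$ arbitrary is obtained exactly as in \cite{simondense}: because $h$ eventually exceeds any constant, we may let the ``number of eigenvalues being pushed on window $[N_s,N_{s+1}]$'' and the constants $K_s$ grow, absorbing them into $h(N_s)$; concretely we only need $\varepsilon_s \le h(N_s)$ with $N_s$ taken sparse enough, and convergence of the infinite products still holds because the $\ell^2$-decay we demand at $E_j$ can be spread over more stages.

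Third, I would record the \textbf{bookkeeping} turning Pr\"ufer data into a genuine $\ell^2$ eigenfunction with the prescribed boundary condition: start the Pr\"ufer variables at $n=1$ from the datum $u(1)/u(0)=\tan\theta_j$, run the construction, and at the end invoke a subordinacy/Gilbert--Pearson-type statement (or just the direct estimate that $\prod_s(\text{contraction factors})\to 0$ while the other coordinate stays bounded) to conclude $u\in\ell^2(\mathbb{N})$ for each $j$; one checks the boundary condition is automatically met because we never touch sites near $n=0$ when advancing windows far out. The potential produced is a sum, over stages, of the block functions, and on any fixed $n$ only boundedly many stages are active (the windows are disjoint up to the damping tails), so $V(n)=O(h(n))/(1+n)$ — and in the finite case $m<\infty$ the $\varepsilon_s$ need not go to zero faster than a fixed constant, giving the cleaner $V(n)=O(1)/(1+n)$ of Theorem \ref{theorem1}.

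I expect the \textbf{main obstacle} to be entirely inside the generating lemma, specifically controlling the quadratic term in the Pr\"ufer expansion at the resonant partner energy $\tilde E=-E$ uniformly over the block: one must show that adding $\frac{1}{1+n}$ to $\tilde V_k+\tilde V_{\tilde k}$ converts the otherwise-positive resonant quadratic drift into something $\le O(\varepsilon^2)/(1+n)$ summable, while checking that this same addition does not spoil the decay of the solution at $E$ itself (it perturbs the leading Wigner--von Neumann decay rate by a controlled amount, and one needs the net rate to stay strictly positive). Everything else — the oscillatory cancellation of linear terms, the convergence of the infinite products in the gluing, the boundary condition — is by now routine given \cite{LO17,l1,JL19}; it is the sign/size of the corrected quadratic form, i.e.\ the precise content of Lemma \ref{keyle} and its neighbors, that carries the theorem.
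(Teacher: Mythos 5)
Your overall two-step architecture (generating function attached to one energy, then the Naboko--Simon slow-addition gluing) is indeed the paper's route, and your bookkeeping for the gluing stage and the absorption of constants into $h(n)$ are correct in spirit. However, you have misidentified the mechanism that the extra $\frac{1}{1+n}$-type term supplies, and consequently you have skipped over what is actually the technical core (the paper's Lemma~\ref{zhongyaoyinli}). The resonance obstruction is \emph{not} a ``quadratic drift'' at the partner energy $\tilde E=-E$: the $O(V^2)$ terms in the log-Pr\"ufer expansion \eqref{lr} are automatically summable since $V=O(1/n)$. The problem lives entirely in the \emph{first-order} term. Writing $V\sim \sin2\pi\theta+\sin2\pi\tilde\theta+\text{const}$ and multiplying by $\sin2\pi\theta$ produces, besides the good $\sin^2$ piece, a cross term $\sin2\pi\theta\,\sin2\pi\tilde\theta$, and its resonant half $\cos2\pi(\theta+\tilde\theta)$ is secular: since $k+\tilde k=1$, the phase $\varphi=\theta+\tilde\theta$ advances by an integer per step and its fractional part is otherwise frozen. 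Depending on the initial phase, this frozen term can cancel the $\sin^2$ contribution entirely, so no decay happens. The ``$+100$'' (i.e.\ $\frac{1}{1+n}$) term does not make anything small or summable; it injects a logarithmic drift into $\varphi$ through the angle equation \eqref{bhgx2}, so that $\varphi$ slowly sweeps all phases. One then needs a nontrivial ergodic-type estimate on the damped recursion with slowly drifting phase to deduce the power decay $R(n)\lesssim (n/n_0)^{-100}$; this is precisely Lemma~\ref{zhongyaoyinli}, and it is where the real work is, not in a sign argument about a quadratic form.

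Two smaller but real gaps. First, your generating lemma contracts $R$ at $E$ and only bounds the radii at the other energies; the paper's Proposition~\ref{prop} must (and does) contract \emph{both} $R$ at $E$ and $\tilde R$ at $-E$ simultaneously, because the constructed $V$, by its very form, drives both angles in the coupled system \eqref{es}, and there is no way to decouple them. Correspondingly, at the other $E_j\notin\{E,-E\}$ the guarantee is a fixed bounded blow-up (factor $2$, from Lemma~\ref{keyle}), not a $1+O(\varepsilon^2)$ estimate; the latter is neither needed nor obtained. Second, the construction does not ``rotate the Pr\"ufer angle into a prescribed phase'' on each block — the angle evolution is dictated by the self-consistent system \eqref{es}--\eqref{vc1}, and each piece simply starts from whatever phase the previous piece left. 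So your outline has the right shape, but as written it would not close: the estimate you would need to control the resonant linear term over a block is exactly the content of Lemma~\ref{zhongyaoyinli}, and your proposal neither states it nor supplies a substitute.
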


	\section{some basic lemmas}
	The basic tool we use in this paper is the  modified Pr\"ufer transformation. We refer readers to  ~\cite{Simon1998EFGP} and ~\cite{remling1998absolutely} for more details. 
	
	For any $E\in (-2,2)$, denote by $E=2\cos \pi k$ with $k=k(E)\in (0,1)$. Suppose $u(n)=u(n,E)$ is a solution of \eqref{mainequation}. Let
	\begin{align}
		Y(n)=\frac{1}{\sin\pi k}\begin{pmatrix}
			\sin \pi k &  0\\
			-\cos \pi k & 1
		\end{pmatrix}
		\begin{pmatrix}
			u(n-1)\\
			u(n)
		\end{pmatrix}.\nonumber
	\end{align}
	
	Define the Pr\"ufer variables $R(n)$ and $\theta(n)$ as 
	\begin{align}
		Y(n)=R(n)\begin{pmatrix}
			\sin (\pi\theta(n)-\pi k)\\
			\cos (\pi\theta(n)-\pi k)
		\end{pmatrix}.\nonumber
	\end{align}
	Then $R(n)$ and $\theta(n)$ obey
	\begin{eqnarray}\label{lr1}
		\frac{R(n+1)^2}{ R(n)^2}=1-\frac{V(n)}{\sin\pi k}\sin2\pi\theta(n)+\frac{V(n)^2}{\sin^2\pi k}\sin^2\pi\theta(n)
	\end{eqnarray}
	and
	\begin{eqnarray}\label{ctt}
		\cot (\pi \theta(n+1)-\pi k)=\cot \pi \theta(n)-\frac{V(n)}{\sin\pi k}.
	\end{eqnarray}
	
	When $|\frac{V(n)}{\sin\pi k}|\leq \frac{1}{10}$, by \eqref{lr1}, one has that 
		\begin{eqnarray}\label{lr}
\ln  R(n+1)^2-\ln R(n)^2 =-\frac{V(n)}{\sin\pi k}\sin2\pi\theta(n)+O\left(\frac{V(n)^2}{\sin^2\pi k} \right)
	\end{eqnarray}
	
	Now let us introduce some useful lemmas.
	\begin{lemma}~\cite[Proposition 2.4]{Simon1998EFGP} Suppose $\abs{\frac{V(n)}{\sin\pi k}}<\frac{1}{2}$, then the $\theta(n)$ defined by \eqref{ctt} satisfies
		\begin{align}\label{bhgx1}
			\abs{\theta(n+1)-\theta(n)-k}\leq\abs{\frac{V(n)}{\sin\pi k}}.
		\end{align}
	\end{lemma}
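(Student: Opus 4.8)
The plan is to read the estimate off directly from the recursion \eqref{ctt}. Write $c=\frac{V(n)}{\sin\pi k}$, so that $|c|<\tfrac12$ and \eqref{ctt} becomes $\cot(\pi\theta(n+1)-\pi k)=\cot(\pi\theta(n))-c$, and set $s=\theta(n+1)-\theta(n)-k$, so that $\pi\theta(n+1)-\pi k=\pi\theta(n)+\pi s$. The claim \eqref{bhgx1} is exactly $|s|\le|c|$.

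The first, and only slightly delicate, point is to identify which branch of $\operatorname{arccot}$ produces $\pi\theta(n+1)-\pi k$. The Pr\"ufer angle of $Y(n+1)$ is defined by continuity, so if one varies the coupling from $0$ to $V(n)$ (equivalently replaces $c$ by $tc$, $t\in[0,1]$) the quantity $\pi\theta(n+1)-\pi k$ traces a continuous path starting, at $t=0$, at $\pi\theta(n)$. When $\cot(\pi\theta(n))$ is finite, $\pi\theta(n)$ lies in an open interval $(m\pi,(m+1)\pi)$ on which $\cot$ is a decreasing bijection onto $\mathbb{R}$ with $1$-Lipschitz inverse $\operatorname{arccot}_m$; since $\cot(\pi\theta(n))-tc$ stays finite, this path never reaches an endpoint $m\pi$ or $(m+1)\pi$, hence stays in branch $m$, and by injectivity of $\cot$ there it must equal $\operatorname{arccot}_m(\cot(\pi\theta(n))-tc)$. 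Thus $\pi\theta(n+1)-\pi k=\operatorname{arccot}_m(\cot(\pi\theta(n))-c)$. (In the exceptional case $\cot(\pi\theta(n))=\infty$, i.e. $\theta(n)\in\mathbb{Z}$, \eqref{ctt} forces $\theta(n+1)-k\in\mathbb{Z}$ and continuity gives $s=0$; the hypothesis $|c|<\tfrac12$ guarantees that $s=0$ rather than a nonzero integer.)

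Given this, the mean value theorem finishes the job: since $\operatorname{arccot}_m'(x)=-\tfrac{1}{1+x^2}$ has absolute value at most $1$,
\begin{align}
|\pi s|=\bigl|\operatorname{arccot}_m\!\bigl(\cot(\pi\theta(n))-c\bigr)-\operatorname{arccot}_m\!\bigl(\cot(\pi\theta(n))\bigr)\bigr|\le |c|,\nonumber
\end{align}
so $|s|\le|c|/\pi\le|c|$, which is \eqref{bhgx1}. An equivalent, more computational route uses $\cot A-\cot B=\tfrac{\sin(B-A)}{\sin A\sin B}$ to rewrite \eqref{ctt} as $\sin(\pi s)=c\,\sin(\pi\theta(n))\sin(\pi\theta(n+1)-\pi k)$; bounding the two sine factors by $1$ gives $|\sin(\pi s)|\le|c|<\tfrac12$, and, the branch choice again keeping $\pi s$ near $0$, Jordan's inequality $|\sin x|\ge\tfrac{2}{\pi}|x|$ on $[-\tfrac{\pi}{2},\tfrac{\pi}{2}]$ yields $|s|\le\tfrac12|c|$.

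The only non-mechanical step is the branch bookkeeping of the second paragraph: once one is convinced that the continuity convention selects $\operatorname{arccot}_m$ of the right-hand side of \eqref{ctt}, and not that value shifted by a nonzero multiple of $\pi$, the bound is a one-line Lipschitz (or Jordan-inequality) estimate — and in fact yields slightly more than \eqref{bhgx1} asks for.
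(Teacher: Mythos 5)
The paper does not prove this lemma at all---it is cited verbatim from Kiselev--Last--Simon \cite{Simon1998EFGP}---so your argument is the only proof on the table, and it is correct. Your Lipschitz route in fact gives more than \eqref{bhgx1} asks: since $|\operatorname{arccot}_m'(x)|=\tfrac{1}{1+x^2}\le 1$ and the deformation $c\mapsto tc$ pins $\pi\theta(n+1)-\pi k$ to the single branch $\operatorname{arccot}_m$ through $\pi\theta(n)$, one gets $\bigl|\pi\bigl(\theta(n+1)-\theta(n)-k\bigr)\bigr|\le\bigl|\tfrac{V(n)}{\sin\pi k}\bigr|$, i.e.\ \eqref{bhgx1} improved by a factor $1/\pi$ (and this is sharp, e.g.\ near $\theta(n)\in\tfrac12+\Z$). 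The one step deserving a caution flag is the branch bookkeeping itself: the paper writes \eqref{ctt} without saying which preimage of $\operatorname{arccot}$ defines $\theta(n+1)$, so the continuity-in-coupling convention you invoke is an input imported from the Kiselev--Last--Simon setup rather than something extractable from the paper as written. You state this honestly, but it is worth emphasizing that it is a hypothesis supplied, not derived. Lastly, your alternative Jordan-inequality route as written has a small elision: the branch argument alone gives only $|\pi s|<\pi$, whereas Jordan's inequality wants $\pi s\in[-\tfrac{\pi}{2},\tfrac{\pi}{2}]$. To close this one must rerun the deformation argument on $|\sin(\pi s(t))|\le t|c|<\tfrac12$ and use connectedness to keep $\pi s(t)$ in $(-\tfrac{\pi}{6},\tfrac{\pi}{6})$; since the Lipschitz route you give first is already complete (and sharper), this is only a remark about the optional second route.
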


\begin{lemma}
	Suppose $\abs{\frac{V(n)}{\sin\pi k}}< \frac{1}{10}$, then the $\theta(n)$ defined by \eqref{ctt} satisfies 
	\begin{align}\label{bhgx2}
	\theta(n+1)=\theta(n)+k+\sin^2\pi\theta(n)\frac{V(n)}{\pi\sin\pi
		k}+O\left(\frac{V(n)^2}{\sin^2\pi k}\right).
	\end{align}
\end{lemma}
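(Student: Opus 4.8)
The plan is to derive \eqref{bhgx2} directly from the recursion \eqref{ctt} by a careful Taylor expansion, using the previous lemma \eqref{bhgx1} to control the size of the increment $\theta(n+1)-\theta(n)-k$. First I would set $\delta=\pi(\theta(n+1)-\theta(n)-k)$, so that by \eqref{bhgx1} we have $\abs{\delta}\le\pi\abs{V(n)/\sin\pi k}<\pi/10$, and rewrite \eqref{ctt} as $\cot(\pi\theta(n)+\delta)=\cot\pi\theta(n)-V(n)/\sin\pi k$. The point is to solve this for $\delta$ to second order in the small parameter $\eta:=V(n)/\sin\pi k$.

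The key computation is to expand the left-hand side. Using the addition formula, $\cot(\pi\theta(n)+\delta)=\cot\pi\theta(n)-\delta\csc^2\pi\theta(n)+O(\delta^2)$ — more precisely one should keep the exact identity $\cot(a+\delta)-\cot a = -\sin\delta/(\sin a\,\sin(a+\delta))$ to see the error terms are genuinely $O(\delta^2)$ uniformly once $\abs{\delta}$ is bounded away from where $\sin(a+\delta)$ could vanish; here the bound $\abs{\delta}<\pi/10$ together with the size of $\eta$ keeps everything away from the singular set on the scale of $\eta$. Matching gives $\delta\csc^2\pi\theta(n) = \eta + O(\delta^2) = \eta + O(\eta^2)$, hence $\delta = \eta\sin^2\pi\theta(n) + O(\eta^2)$. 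Dividing by $\pi$ yields exactly
\begin{align}
\theta(n+1)-\theta(n)-k = \sin^2\pi\theta(n)\,\frac{V(n)}{\pi\sin\pi k} + O\!\left(\frac{V(n)^2}{\sin^2\pi k}\right),\nonumber
\end{align}
which is \eqref{bhgx2}.

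The one place that needs genuine care — and what I'd flag as the main obstacle — is making the error term uniform: the naive expansion of $\cot$ blows up near $\pi\theta(n)\in\Z$, so one must argue that the combination appearing in \eqref{ctt} stays regular there. The clean way is to avoid expanding $\cot$ near its poles altogether and instead expand $\tan$: when $\pi\theta(n)$ is near a multiple of $\pi$, write the recursion in terms of $\tan(\pi\theta(n+1)-\pi k)$ and $\tan\pi\theta(n)$, and when $\pi\theta(n)$ is bounded away from multiples of $\pi$ use the $\cot$ form directly; in either regime the increment is $O(\eta)$ by \eqref{bhgx1} and a single Taylor step gives the claimed quadratic remainder. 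Alternatively, one can simply verify the exact identity $\sin\delta = \eta\sin\pi\theta(n)\sin(\pi\theta(n)+\delta)$ (which is just \eqref{ctt} multiplied out), expand $\sin(\pi\theta(n)+\delta)=\sin\pi\theta(n)\cos\delta+\cos\pi\theta(n)\sin\delta$, solve for $\sin\delta$, and then use $\sin\delta=\delta+O(\delta^3)$ with $\abs{\delta}\lesssim\abs{\eta}$; this route never divides by $\sin\pi\theta(n)$ and makes the uniformity of the $O(\eta^2)$ term transparent.
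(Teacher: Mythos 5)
Your proof is correct in substance and you have correctly put your finger on the one real hazard: the naive Taylor expansion of $\cot$ degenerates when $\pi\theta(n)$ is near a multiple of $\pi$. Of the two repairs you sketch, the second one (cross-multiplying \eqref{ctt} to get the exact identity $\sin\delta=\eta\,\sin\pi\theta(n)\,\sin(\pi\theta(n)+\delta)$, where $\eta=V(n)/\sin\pi k$, expanding $\sin(\pi\theta(n)+\delta)$, solving for $\sin\delta$, and then using $\delta-\sin\delta=O(\delta^3)$ together with $\abs{\delta}\le\pi\abs{\eta}$ from \eqref{bhgx1}) is complete and clean: it never divides by $\sin\pi\theta(n)$, the denominator $1-\eta\sin\pi\theta(n)\cos\pi\theta(n)$ is bounded away from $0$ because $\abs{\eta}<1/10$, and the resulting $O(\eta^2)$ error is manifestly uniform. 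The first repair (switching between $\tan$ and $\cot$ forms according to which of $\sin\pi\theta(n)$, $\cos\pi\theta(n)$ is small) would also work but requires a two-case bookkeeping that the $\sin\delta$ identity bypasses entirely.

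The paper takes a genuinely different route: it rewrites \eqref{ctt} in terms of $e^{2i\theta}$ via the identity $e^{2i\theta}=1-\frac{2}{1+i\cot\theta}$, so that the recursion becomes a multiplicative relation between unit-modulus complex numbers, $e^{2i\theta_1}=e^{2i\theta_0}e^{2i\eta\sin^2\theta_0}+O(\eta^2)$, and then converts back to angles using $\abs{e^{i\phi}-1}\ge\frac{2}{\pi}\abs{\phi}$. Lifting to the circle makes the pole problem disappear automatically, since $e^{2i\theta}$ is smooth precisely where $\cot\theta$ blows up; this is essentially the same device as your cross-multiplied $\sin\delta$ identity, packaged so the algebra is one chain of equalities rather than an expand-and-solve. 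Your version is a little more elementary (no complex arithmetic, no need for the lower bound $\abs{e^{i\phi}-1}\ge\frac{2}{\pi}\abs{\phi}$); the paper's version is a little more mechanical once one knows the exponential identity. Either is an acceptable proof.
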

\begin{proof}
	Denote by $\theta_1=\pi\theta(n+1)-\pi k,\ \theta_0=\pi\theta(n)$.
	
	Since
	\begin{align}
	e^{2i\theta}=1-\frac{2}{1+i\cot\theta},\nonumber
	\end{align}
	by \eqref{ctt} we have
	\begin{align}
	e^{2i\theta_1}=&e^{2i\theta_0}-\frac{\frac{iV(n)}{2\sin\pi k}(e^{2i\theta_0}-1)^2}{1-\frac{iV(n)}{2\sin\pi k}(1-e^{2i\theta_0})}\nonumber\\
	=&e^{2i\theta_0}-{\frac{iV(n)}{2\sin\pi k}(e^{2i\theta_0}-1)^2}\left(1+O\left(\frac{V(n)}{\sin\pi k}\right)\right)\nonumber\\
	=&e^{2i\theta_0}\left(1-\frac{iV(n)}{2\sin\pi k}e^{2i\theta_0}-\frac{iV(n)}{2\sin\pi k}e^{-2i\theta_0}+\frac{iV(n)}{\sin\pi k}\right)+O\left(\frac{V(n)^2}{\sin^2\pi k}\right)\nonumber\\
	=&e^{2i\theta_0}\left(1-\frac{iV(n)}{\sin\pi k}\cos2\theta_0+\frac{iV(n)}{\sin\pi k}\right)+O\left(\frac{V(n)^2}{\sin^2\pi k}\right)\nonumber\\
	=&e^{2i\theta_0}\left(1+\frac{2iV(n)}{\sin\pi k}\sin^2\theta_0\right)+O\left(\frac{V(n)^2}{\sin^2\pi k}\right)\nonumber\\
	\label{ee}=&e^{2i\theta_0}e^{2i\frac{V(n)}{\sin\pi k}\sin^2\theta_0}+O\left(\frac{V(n)^2}{\sin^2\pi k}\right).
	\end{align}
	Since $\abs{e^{i\theta}-1}\geq\frac{2}{\pi}\abs{\theta}$, by \eqref{ee} one obtains
	\begin{align}
	\frac{4}{\pi}\abs{\theta_1-\theta_0-\frac{V(n)}{\sin\pi k}\sin^2\theta_0}\leq\abs{e^{2i\theta_1}-e^{2i\left(\theta_0+\frac{V(n)}{\sin\pi k}\sin^2\theta_0\right)}}=O\left(\frac{V(n)^2}{\sin^2\pi k}\right).\nonumber
	\end{align}
This leads to \eqref{bhgx2}.
	
\end{proof}
	\begin{lemma}\label{ki}
		Let $ k\in (0,1)$ with $k\neq \frac{1}{2}$.  Then for any $\varepsilon>0$,  there exists   $N\in \mathbb{N}$ (depending on $k$ and $\varepsilon>0$)  such  that for any $\theta \in  \R$ and   $\nu\in\{2,4\}$, we have
		\begin{align}\label{cos1}
			\abs{\frac{1}{N}\sum_{l=0}^{N-1}\cos(\theta\pm\nu\pi kl)}\leq \varepsilon,
		\end{align}
		\begin{align}\label{sin1}
			\abs{\frac{1}{N}\sum_{l=0}^{N-1}\sin(\theta\pm\nu\pi k l)}\leq \varepsilon.
		\end{align}
	\end{lemma}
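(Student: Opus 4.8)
The plan is to reduce both estimates to a single elementary bound on geometric sums of complex exponentials. Write $\cos(\theta\pm\nu\pi k l)=\operatorname{Re}\left(e^{i\theta}e^{\pm i\nu\pi k l}\right)$ and similarly $\sin(\theta\pm\nu\pi k l)=\operatorname{Im}\left(e^{i\theta}e^{\pm i\nu\pi k l}\right)$, so that in all four cases it suffices to control $\left|\frac{1}{N}\sum_{l=0}^{N-1}e^{\pm i\nu\pi k l}\right|$ uniformly in $\theta$. Since $k\in(0,1)$ and $k\neq\frac12$, and $\nu\in\{2,4\}$, the phase $\nu\pi k$ is never an integer multiple of $2\pi$: indeed $2\pi k\in(0,2\pi)\setminus\{\pi\}$ and $4\pi k\in(0,4\pi)\setminus\{2\pi\}$, so $e^{\pm i\nu\pi k}\neq 1$ for each admissible $\nu$. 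Hence the geometric series sums to
\begin{align}
\abs{\frac{1}{N}\sum_{l=0}^{N-1}e^{\pm i\nu\pi k l}}=\frac{1}{N}\abs{\frac{e^{\pm i\nu\pi k N}-1}{e^{\pm i\nu\pi k}-1}}\leq\frac{2}{N}\cdot\frac{1}{\abs{e^{i\nu\pi k}-1}}.\nonumber
\end{align}

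Now set $c(k)=\min\left(\abs{e^{2i\pi k}-1},\abs{e^{4i\pi k}-1}\right)>0$, which depends only on $k$. Given $\varepsilon>0$, choose $N\in\mathbb{N}$ with $N>\frac{2}{c(k)\varepsilon}$. Then for every $\theta\in\R$, every $\nu\in\{2,4\}$, and either sign, the left-hand sides of \eqref{cos1} and \eqref{sin1} are bounded by $\left|\frac{1}{N}\sum_{l=0}^{N-1}e^{\pm i\nu\pi k l}\right|\leq\frac{2}{Nc(k)}<\varepsilon$, since taking real or imaginary parts and multiplying by the unimodular factor $e^{i\theta}$ cannot increase the modulus. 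This gives both inequalities simultaneously.

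There is essentially no serious obstacle here; the only point requiring care is verifying that $e^{i\nu\pi k}\neq 1$ for $\nu\in\{2,4\}$ under the hypotheses $k\in(0,1)$, $k\neq\frac12$ — in particular that the excluded value $k=\frac12$ is exactly what rules out $e^{4i\pi k}=1$, while $k\in(0,1)$ rules out $e^{2i\pi k}=1$. Once the denominator is bounded away from zero by a constant depending only on $k$, the uniformity in $\theta$ is automatic because $\theta$ enters only through a modulus-one prefactor. I would state the argument for a generic sign and $\nu$ and remark that the other cases are identical (or follow by complex conjugation).
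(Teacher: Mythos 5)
Your proof is correct and complete. Structurally, it differs from the paper's argument in a useful way: the paper splits into two cases, treating rational $k=p/q$ via the closed form for $\sum\cos(\theta+lx)$ with $N=q$ (where the sum vanishes exactly), and irrational $k$ via ergodicity of the circle rotation $x\mapsto x+2\pi k$. Your geometric-series bound $\bigl|\frac{1}{N}\sum_{l=0}^{N-1}e^{\pm i\nu\pi kl}\bigr|\leq\frac{2}{N\,|e^{i\nu\pi k}-1|}$ handles both cases uniformly and avoids invoking the ergodic theorem altogether. It is also worth noting that your version makes the uniformity in $\theta$ completely explicit, since $\theta$ only contributes a unimodular prefactor $e^{i\theta}$; the paper's ergodicity step states the bound for large $N$ without spelling out why the estimate is uniform over all initial points $\theta$, so your argument is arguably cleaner on that point. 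Both routes ultimately rest on the same observation that $\nu\pi k\notin 2\pi\mathbb{Z}$ for $\nu\in\{2,4\}$, which is exactly what the hypotheses $k\in(0,1)$, $k\neq\frac12$ guarantee, and you identify this correctly. What your approach buys is brevity and an explicit quantitative choice of $N$; what the paper's approach buys is, in the rational case, the slightly stronger statement that the average is exactly zero for $N=q$.
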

	\begin{proof}
		Clearly,  \eqref{sin1} follows from \eqref{cos1}.  In order to avoid  repetition, we only prove \eqref{cos1} for $\nu=2$.
		
		{\bf Case 1}: $k$ is rational
		
		In this case, let $k=\frac{p}{q}$,  and $p$ and $q$ are coprime.
		Direct computations imply that for $x\notin 2\pi \Z$, 
		\begin{align}
			\cos \theta+\cos(\theta+x)+\cdots+\cos(\theta+(N-1)x)=\frac{\sin\left(\theta-\frac{x}{2}+Nx\right)-\sin\left(\theta-\frac{x}{2}\right)}{2\sin\frac{x}{2}}.\nonumber
		\end{align}
	Therefore, one has that for $N=q$, 
		\begin{align}
			\abs{\frac{1}{N}\sum_{l=0}^{N-1}\cos(\theta+2\pi k l)}=0.\nonumber
		\end{align}
		This obviously implies \eqref{cos1}.
		
			{\bf Case 2}: $k$ is irrational
			
			We note that the map $T:x\to x+2k\pi \mod 2\pi$ is ergodic.
			By the ergodicity, one has that for large $N$, 
		\begin{align}
			\abs{\frac{1}{N}\sum_{l=0}^{N-1}\cos(\theta+2\pi k  l)}=	\abs{\frac{1}{N}\sum_{l=0}^{N-1}\cos(\theta+2\pi k  l)-\int_{0}^{2\pi} \cos tdt }\leq\varepsilon.\nonumber
		\end{align}
		
	We finish the proof.
\end{proof}
	\section{Technical preparations}
	Let $N\in \N$, which will be determined later.
\begin{lemma}\label{zhongyaoyinli}
	Let  $ k\in\left(0,1\right)$,    $L\in\N$, $n_0\in\N$ and  $b \in\mathbb{N}$.  Assume that  $F=F(n)$ and $\varphi=\varphi(n)$ satisfy for any $m\in\mathbb{N}$,
	\begin{align}\label{llrnmx}
	\ln F(n_0+(m+1)&N)^2=\ln F(n_0+mN)^2\\
	-&\frac{K}{2(n_0+mN-b)\sin\pi k}(1-\cos2\pi\varphi(n_0+mN)+\delta(m))\nonumber
	\end{align}
	and 
	\begin{align}\label{ttnmx}
	\varphi(n_0+(m+1)N)=\varphi(n_0+mN)+L+\frac{K}{(n_0+mN-b)\pi\sin\pi k}\left(100+\varepsilon(m)\right) ,
	\end{align}
	where $K>10^8\pi N$, $\abs{\delta(m)}\leq \frac{\sin\pi k}{10^4}$ and $\abs{\varepsilon(m)}<1$. Then we have that for large enough $n_0-b>0$,
	\begin{align}\label{rn0jian}
	F(n_0+mN)\leq 1.5F(n_0)\left(\frac{n_0-b+mN}{n_0-b}\right)^{-100}.
	\end{align}
\end{lemma}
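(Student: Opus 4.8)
The plan is to iterate the recursion \eqref{llrnmx} and compare the resulting telescoped sum against the logarithm of the target power $\left(\frac{n_0-b+mN}{n_0-b}\right)^{-100}$. Writing $a_m = n_0 + mN - b$, summing \eqref{llrnmx} from $0$ to $m-1$ gives
\begin{align}
\ln F(n_0+mN)^2 = \ln F(n_0)^2 - \frac{K}{2\sin\pi k}\sum_{j=0}^{m-1}\frac{1}{a_j}\bigl(1-\cos 2\pi\varphi(n_0+jN)+\delta(j)\bigr).\nonumber
\end{align}
Since $\abs{\delta(j)}\le \frac{\sin\pi k}{10^4}$, each bracket is at least $1 - 1 - \frac{1}{10^4} + \text{(worst case of cosine)}$; more precisely $1-\cos 2\pi\varphi - \abs{\delta} \ge -\frac{1}{10^4}\sin\pi k$ but can also be as small as that, so the crude pointwise bound is not quite enough and one must extract the gain from the oscillation of the cosine term. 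This is where the hypothesis $K > 10^8\pi N$ and the phase recursion \eqref{ttnmx} enter: because $\varphi$ advances by an integer $L$ plus a term of size $\frac{K}{a_j \pi\sin\pi k}(100+\varepsilon(j))$, consecutive phases $\varphi(n_0+jN)$ are spread out, and the key point is that $1-\cos 2\pi\varphi \ge 0$ always, so dropping the cosine gives
\begin{align}
\ln F(n_0+mN)^2 \le \ln F(n_0)^2 - \frac{K}{2\sin\pi k}\sum_{j=0}^{m-1}\frac{1}{a_j}\bigl(1 - \abs{\delta(j)}\bigr) + \frac{K}{2\sin\pi k}\sum_{j=0}^{m-1}\frac{\cos 2\pi\varphi(n_0+jN)}{a_j}.\nonumber
\end{align}

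The main step is to show the two surviving sums combine to something $\le -200\ln\frac{a_m}{a_0} + O(1)$. For the first sum, $1-\abs{\delta(j)} \ge 1 - \frac{1}{10^4} \ge \frac{1}{2}$ say (in fact much closer to $1$), and $\sum_{j=0}^{m-1}\frac{1}{a_j} = \sum_{j=0}^{m-1}\frac{1}{n_0-b+jN} \ge \frac{1}{N}\ln\frac{a_m}{a_0} - C$ by comparison with an integral. With the factor $\frac{K}{2\sin\pi k}$ and $K > 10^8\pi N$, this produces a negative contribution of size at least, say, $\frac{10^8\pi N}{2\sin\pi k}\cdot\frac{1}{N}\cdot\frac{1}{2}\ln\frac{a_m}{a_0}$, which dwarfs $200\ln\frac{a_m}{a_0}$ with enormous room to spare. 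For the oscillatory sum $\sum_j \frac{\cos 2\pi\varphi(n_0+jN)}{a_j}$, I would use summation by parts: since $\frac{1}{a_j}$ is monotone decreasing with $\frac{1}{a_j} - \frac{1}{a_{j+1}} = O(N/a_j^2)$, it suffices to bound the partial sums $\bigl|\sum_{j=0}^{M}\cos 2\pi\varphi(n_0+jN)\bigr|$. Here the phase increment from \eqref{ttnmx} is $L + \frac{K}{a_j\pi\sin\pi k}(100+\varepsilon(j))$; modulo $1$ the integer $L$ drops out, and the fractional part moves by a slowly varying amount $\asymp \frac{K}{a_j}$, so the phases behave like a slowly-modulated linear sequence and the partial sums are $O(a_j/K) = O(a_M/K)$ by the standard estimate for sums of $\cos$ of a sequence with controlled gaps (or by a second summation by parts comparing to a geometric-type bound). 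Multiplying by $\frac{1}{a_M}$ and the monotone weights, the oscillatory sum contributes only $O(1/K)\cdot(\text{bounded})$, which is absorbed.

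Combining: $\ln F(n_0+mN)^2 \le \ln F(n_0)^2 - 400\ln\frac{a_m}{a_0} + C_0$ for an absolute constant $C_0$ once $a_0 = n_0-b$ is large enough; exponentiating and taking square roots gives $F(n_0+mN) \le e^{C_0/2} F(n_0)\bigl(\frac{a_m}{a_0}\bigr)^{-200} \le 1.5\, F(n_0)\bigl(\frac{n_0-b+mN}{n_0-b}\bigr)^{-100}$, where in the last step I use that $\bigl(\frac{a_m}{a_0}\bigr)^{-200} \le \bigl(\frac{a_m}{a_0}\bigr)^{-100}$ and that for $n_0-b$ large the constant $e^{C_0/2}$ together with the extra $100$ powers of decay is $\le 1.5$. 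I expect the genuine obstacle to be the bookkeeping in the oscillatory sum: one must verify that the phase recursion \eqref{ttnmx}, despite the nuisance terms $\varepsilon(m)$ and the varying denominators $a_j$, still yields partial-sum cancellation of order $a_M/K$ uniformly — essentially a discrete stationary-phase / van der Corput argument adapted to the fact that $\varphi$ is defined only along the arithmetic progression $n_0+jN$. Everything else is routine integral comparison and the generous size assumption on $K$.
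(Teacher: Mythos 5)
The place where your proposal departs most from the paper is the estimate on the partial sums $S_M=\sum_{j=0}^{M}\cos 2\pi\varphi(n_0+jN)$, and this is exactly where there is a genuine gap. You claim $S_M=O(a_M/K)$ ``by the standard estimate for sums of $\cos$ of a sequence with controlled gaps.'' That estimate (Kuzmin--Landau / discrete van der Corput) requires the increments $\Delta_j := \varphi_{j+1}-\varphi_j - L = \frac{K(100+\varepsilon(j))}{a_j\pi\sin\pi k}$ to be monotone, or at least of bounded total variation commensurate with their size. Here they are not: the $\varepsilon(j)$ are arbitrary with $\abs{\varepsilon(j)}<1$, so $\Delta_j$ can fluctuate by a relative amount of order $\frac{2}{100}$ at every step, and the total variation of $(\Delta_j)_{j\le M}$ is of order $\frac{K}{N\sin\pi k}\ln\frac{a_M}{a_0}$, which ruins the $O(1/\min\Delta_j)$ bound. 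Concretely, an adversarial choice of $\varepsilon(j)$ can slow the phase down by $1\%$ whenever $\cos 2\pi\varphi_j>0$ and speed it up by $1\%$ whenever $\cos 2\pi\varphi_j<0$; over one full revolution of the phase this biases $\sum\cos$ by roughly $\frac{1}{\pi}\bigl(\frac{1}{99}-\frac{1}{101}\bigr)$ per unit step, and the bias accumulates over revolutions. Thus $\abs{S_j}$ grows \emph{linearly} in $j$ (at a small but $O(1)$ rate, roughly $\frac{1}{25\pi}j$), which is larger than $a_j/K$ by a factor on the order of $K/N>10^8\pi$. The ``obstacle'' you flagged at the end of your proposal is real, and the bound $O(a_M/K)$ is false.

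The conclusion is nevertheless reachable, but only after replacing $\abs{S_j}=O(a_j/K)$ by the true statement $\abs{S_j}\le c\,j$ with $c<1$ (indeed $c\ll 1$); then your Abel summation gives $\bigl|\sum_j\frac{\cos2\pi\varphi_j}{a_j}\bigr|\le c'\sum_j\frac{1}{a_j}$ for some $c'<1-10^{-4}$, and the non-oscillatory term wins because the overall prefactor $\frac{K}{2N\sin\pi k}>\frac{10^8\pi}{2}$ dwarfs $200$. But proving $c<1$ quantitatively requires showing the phase cannot linger too long near $\cos=1$ relative to $\cos=-1$ --- which is precisely the content of the paper's argument. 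The paper does not attempt any global cancellation estimate: it sets $\tilde\varphi=\varphi-mL$, tracks the successive times $m_{2l},m_{2l+1}$ at which $\tilde\varphi$ crosses the thresholds $\frac{3l\mp 1}{3}$, and uses \eqref{ttnmx} to show (via \eqref{gnew1l}--\eqref{gnew2l}) that the logarithmic time spent in the ``good'' window $\tilde\varphi\ \mathrm{mod}\ \Z\in[\frac13,\frac23]$, where \eqref{llrnmx} forces a decay of rate $\frac{K}{2a\sin\pi k}$, is comparable to the logarithmic time spent in the ``bad'' window, where the possible growth rate is only $\frac{K}{2\cdot 10^4 a}$. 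The $10^4$ gap in rates then does all the work. In short: your decomposition into oscillatory and non-oscillatory parts is a legitimate alternative framing, but the partial-sum cancellation you invoke does not hold as stated, and filling the hole drives you back to a trajectory-tracking argument of the paper's type.
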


\begin{proof}[\textbf{Proof.}]
	Denote by 
 	$\tilde\varphi(n_0+mN)=\varphi(n_0+mN)-mL.$
Then by \eqref{llrnmx} and \eqref{ttnmx} one has that
\begin{align}\label{llrnm}
	\ln F(n_0+(m+1)&N)^2=\ln F(n_0+mN)^2\\
	-&\frac{K}{2(n_0+mN-b)\sin\pi k}(1-\cos2\pi\tilde\varphi(n_0+mN)+\delta(m))\nonumber
\end{align}
and
\begin{align}\label{ttnm}
\tilde\varphi(n_0+(m+1)N)=\tilde\varphi(n_0+mN)+\frac{K}{(n_0+mN-b)\pi\sin\pi k}\left(100+\varepsilon(m)\right).
\end{align}

	Inequality 
$1-\cos2\pi\tilde\varphi(n_0+mN)+\delta(m)\geq -\frac{\sin\pi k}{10^4}$
 and \eqref{llrnm} imply that 
	\begin{align}\label{zenda}
	\ln F(n_0+(m+1)N)^2\leq \ln F(n_0+mN)^2+\frac{K}{2\times10^4}\frac{1}{n_0+mN-b}.
	\end{align}
	
In the case that $\tilde\varphi(n_0+mN)\mod\Z\in\left[\frac{1}{3},\frac{2}{3}\right]$,   one has that 
	\begin{align}
	1-\cos2\pi\tilde\varphi(n_0+mN)+\delta(m)\geq 1,\nonumber
	\end{align}
	and hence (by \eqref{llrnm}),
	\begin{align}\label{jianxiao}
	\ln F(n_0+(m+1)N)^2\leq \ln F(n_0+mN)^2-\frac{K}{2\sin\pi k}\frac{1}{n_0+mN-b}.
	\end{align}

Without loss of generality, we assume that $\tilde\varphi(n_0)\in \left[-\frac{4}{3},-\frac{1}{3}\right]$.  By \eqref{ttnm} we  have that there exists  an increasing sequence $\{m_l\}_{l=1}^{\infty}\subset \mathbb{N}$ such that for any $l\in\N$,
		\begin{align}\label{gj261}
	\tilde\varphi(n_0+m_{2l}N)\leq\frac{3l-1}{3},\ \ \tilde\varphi(n_0+(m_{2l}+1)N)>\frac{3l-1}{3}, 
	\end{align}
	and 
	\begin{align}\label{gj262}
	\tilde\varphi(n_0+m_{2l+1}N)\leq\frac{3l+1}{3},\ \  \tilde\varphi(n_0+(m_{2l+1}+1)N)>\frac{3l+1}{3}. 
	\end{align}

By \eqref{ttnm}, one has that 
			 \begin{align}
				&\tilde\varphi(n_0+(m_{2l+1}+1)N)\nonumber\\
				=&\tilde\varphi(n_0+m_{2l}N)+\sum_{m=m_{2l}}^{m_{2l+1}}\frac{K}{(n_0+mN-b)\pi\sin\pi k}(100+\varepsilon(m)). \label{gj264}
	\end{align}
Inequalities  \eqref{gj261} and \eqref{gj262} imply that 
\begin{equation}\label{gj263}
\tilde\varphi(n_0+(m_{2l+1}+1)N)-\tilde\varphi(n_0+m_{2l}N)\geq  	\frac{2}{3}.
\end{equation}
By \eqref{gj264} and \eqref{gj263}, we have that 
\begin{align}
	\frac{2}{3}\leq&\sum_{m=m_{2l}}^{m_{2l+1}}\frac{K}{(n_0+mN-b)\pi\sin\pi k}(100+\varepsilon(m))\nonumber\\
	\leq &101\frac{K}{N\pi\sin\pi k}\sum_{m=m_{2l}}^{m_{2l+1}}\frac{1}{\frac{n_0-b}{N}+m}\nonumber\\
	\leq &102C_k\ln\frac{m_{2l+1}N+n_0-b}{m_{2l}N+n_0-b},\label{gj266}
\end{align}
where $C_k=\frac{K}{N\pi\sin\pi k}\geq 10^8$.

Replacing \eqref{gj263}  with  the  inequality 
\begin{align}
	\tilde\varphi(n_0+m_{2l+1}N)-\tilde\varphi(n_0+(m_{2l}+1))N)\leq \frac{2}{3}\nonumber
\end{align}
and following  the proof  of \eqref{gj266}, 
we have
\begin{align}
	98C_k\ln \frac{m_{2l+1}N+n_0-b}{m_{2l}N+n_0-b}\leq \frac{2}{3}.\nonumber
\end{align}

Therefore, one  concludes  that for any $l\in\N$, 
	\begin{align}\label{gnew11}
	98C_k\ln \frac{m_{2l+1}N+n_0-b}{m_{2l}N+n_0-b}\leq \frac{2}{3}\leq 102C_k\ln\frac{m_{2l+1}N+n_0-b}{m_{2l}N+n_0-b}. 
	\end{align}
 Similar  to the proof of  \eqref{gnew11}, we have that  for any $l\in\N$,
	\begin{align}\label{gnew21}
	98C_k\ln \frac{m_{2l+2}N+n_0-b}{m_{2l+1}N+n_0-b}\leq \frac{1}{3}\leq 102C_k\ln\frac{m_{2l+2}N+n_0-b}{m_{2l+1}N+n_0-b}.
	\end{align}

	Rewrite \eqref{gnew11} and \eqref{gnew21} as
		\begin{align}\label{gnew1l}
 \frac{1}{153 C_k}\leq  \ln \frac{m_{2l+1}N+n_0-b}{m_{2l}N+n_0-b}\leq \frac{1}{147 C_k},
	\end{align}
	and 
	\begin{align}\label{gnew2l}
 \frac{1}{306 C_k}\leq \ln \frac{m_{2l+2}N+n_0-b}{m_{2l+1}N+n_0-b}\leq  \frac{1}{294 C_k}.
	\end{align}

	By \eqref{zenda} and \eqref{jianxiao}, we obtain for any    $l\in\N$,
	\begin{align} \label{gnew3}
	\ln F(n_0+m_{2l+1}N)^2\leq& \ln F(n_0+m_{2l}N)^2+\frac{K}{10^4N}\ln\frac{m_{2l+1}N+n_0-b}{m_{2l}N+n_0-b}\nonumber\\
	&+O\left(\frac{1}{m_{2l}N+n_0-b}\right),
	\end{align}
	and
	\begin{align}\label{gnew4}
	\ln F(n_0+m_{2l+2}N)^2\leq &\ln F(n_0+m_{2l+1}N)^2-\frac{K}{2N\sin\pi k}\ln \frac{m_{2l+2}N+n_0-b}{m_{2l+1}N+n_0-b} \nonumber\\&+O\left(\frac{1}{m_{2l+1}N+n_0-b}\right). 
	\end{align}
	By \eqref{gnew1l} and \eqref{gnew3}, one has  for large $n_0-b$, 
	\begin{align}\label{rn1rn0}
	F(n_0+m_{2l+1}N)^2\leq F(n_0+m_{2l}N)^2e^{\frac{K}{146NC_k\times10^4}}\leq F(n_0+m_{2l}N)^2e^{\frac{1}{10^5}}.
	\end{align}
By \eqref{gnew2l} and \eqref{gnew4}, one has  for large $n_0-b$, 
	\begin{align}\label{rn2rn1}
	F(n_0+m_{2l+2}N)^2\leq F(n_0+m_{2l+1}N)^2e^{-\frac{K}{613NC_k\sin\pi k}}\leq F(n_0+m_{2l}N)^2e^{-\frac{1}{10^3}}.
	\end{align}
	
	Moreover, it is not difficult to see that (similar to the proof of \eqref{rn1rn0})  for any      $l\in\N$ and $m\in [m_{2l},m_{2l+2}]$,
	\begin{align}\label{Rmax}
	F(n_0+mN)^2\leq F(n_0+m_{2l}N)^2e^{\frac{1}{10^5}}.
	\end{align}

	Iterating \eqref{rn2rn1} ($l$ times) and by \eqref{Rmax}, one has that for any  $l\in \N$ and $m\in[m_{2l},m_{2l+2}]$,
	
 	\begin{align} 
	\label{R2l1}	F(n_0+mN)^2\leq F(n_0+m_{0}N)^2e^{-\frac{l}{10^3}+\frac{1}{10^5}}.
	\end{align}
	
 Similar to the proof  of  \eqref{gnew11} or \eqref{gnew21}, one has that
 \begin{align}\label{gnew10}
  \ln \frac{m_0N+n_0-b}{n_0-b}\leq \frac{1}{98C_k}.
 \end{align}
	Similar to the proof  of \eqref{rn1rn0} (also see \eqref{Rmax}),
	one has that 
\begin{equation}\label{g18}
		F(n_0+m_0N)^2\leq e^{\frac{1}{10^5}}F(n_0)^2.
	\end{equation}
	By \eqref{R2l1} and \eqref{g18}, 
	one has that  for any    $l\in\N$ and $m\in[m_{2l},m_{2l+2}]$,
 	
	\begin{align} 
	\label{R2l11}	F(n_0+mN)^2\leq F(n_0)^2 e^{-\frac{l}{10^3}+\frac{2}{10^5}}.
	\end{align}
By \eqref{gnew1l} and \eqref{gnew2l}, one has 
	\begin{align}\label{gnew2x}
	\frac{l+1}{102C_k}\leq\ln \frac{m_{2l+2}N+n_0-b}{m_0N+n_0-b}\leq \frac{l+1}{98C_k}.
	\end{align}	 
By \eqref{gnew10} and \eqref{gnew2x}, one has that
	\begin{align}
		\ln\frac{m_{2l+2}N+n_0-b}{n_0-b}=\ln\frac{m_{2l+2}N+n_0-b}{m_0N+n_0-b}+\ln\frac{m_{0}N+n_0-b}{n_0-b}
		\leq {\frac{l+2}{98C_k}}.\nonumber
	\end{align}
This implies that
\begin{align}\label{gnew3x}
	l\geq 98C_k\ln\frac{m_{2l+2}N+n_0-b}{n_0-b}-2\geq98C_k\ln\frac{mN+n_0-b}{n_0-b}-2.
\end{align}
By \eqref{R2l11}, \eqref{gnew3x} and the fact that $C_k\geq 10^8$, one has 
\begin{align}
	F(n_0+mN)^2\leq& F(n_0)^2e^{\frac{2}{10^5}+\frac{2}{10^3}}e^{-\frac{98C_k}{10^3}\ln\frac{mN+n_0-b}{n_0-b}}\nonumber\\
	\leq& F(n_0)^2e^{\frac{3}{10^3}}\left(\frac{mN+n_0-b}{n_0-b}\right)^{-200}.\nonumber
\end{align}
This implies  \eqref{rn0jian}.

\end{proof}

	\section{Constructions of  potentials and proof of Theorems \ref{theorem1} and \ref{theorem2} }
	
	In this section, we always assume that $E=2\cos\pi k, E_j=2\cos \pi k_j$, $\tilde{E}=-E=2\cos\pi \tilde{k},$ we also denote by $(R(n),\theta(n)), (R_j(n),\theta_j(n)), (\tilde{R}(n),\tilde{\theta}(n))$ the Pr\"ufer variables of $E,E_j,\tilde{E}$ (or say $k,k_j,\tilde{k}$). We remark that $k=1-\tilde{k}$ and 
	 $
		\sin\pi k=\sin\pi \tilde{k}.
	$

	\begin{proposition}\label{prop}
		Let $A=\{E_j\}_{j=1}^m\subset(-2,2)$. 
		Let $E\in(-2,2)$  be such that  both $E$ and $\tilde{E}=-E$ are not in $A$. 	Suppose $\theta_0,\tilde{\theta}_0\in [0,\pi)$.   Then there exist constants $K_1(E,A),K_2(E,A)$  and  a function  $V(E,A,n_0,b,\theta_0,\tilde{\theta}_0)$ such that  the following holds for $n_0-b\geq K_2(E,A)$:
	    \begin{description}
	    	\item[\textbf{Perturbation}] ${\rm{supp}}(V)\subset[n_0,\infty)$, and for any $n\geq n_0$,
	    	\begin{align}\label{vxc}
	    		\abs{V(E,A,n_0,b,\theta_0,\tilde{\theta}_0)}\leq \frac{K_1(E,A)}{n-b};
	    	\end{align}
	    	\item[Solution for $E$] the solution of $(\Delta+V)u=Eu$ with the boundary condition $\theta(n_0)=\theta_0$ satisfies for any $n$ with $n>n_0$,
	    	\begin{align}\label{R-100}
	R(n)\leq 2\left(\frac{n-b}{n_0-b}\right)^{-100}R(n_0);
\end{align}

\item[Solution for $\tilde{E}$ (if $E\neq 0$)] the solution of $(\Delta+V)u=\tilde{E}u$ with the boundary condition $\tilde{\theta}(n_0)=\tilde{\theta}_0$ satisfies for any $n>n_0$,
\begin{align}\label{Rtilde-100}
	\tilde{R}(n)\leq 2\left(\frac{n-b}{n_0-b}\right)^{-100}\tilde{R}(n_0);
\end{align}

\item[Solution for $E_j$] any solution of $(\Delta+V)u=E_ju$ satisfies  for any $n>n_0$, 
\begin{align}\label{RjC}
	R_j(n)\leq 2R_j(n_0).
\end{align}
\end{description}
		\end{proposition}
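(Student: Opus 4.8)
The plan is to split off the degenerate case $E=0$ (where $\tilde E=E$ and $k=\frac12$, so there is no resonant partner and a single Wigner--von Neumann term $V(n)=\frac{a}{n-b}\sin 2\pi\theta(n)$ with $a$ a large constant does the job) and to concentrate on the main case $E\neq 0$, i.e.\ $k\neq\frac12$. On $[n_0,\infty)$ I will define $V$ \emph{implicitly}, together with the Pr\"ufer angles $\theta,\tilde\theta$ of $E,\tilde E$, by
$$V(n)=\frac1{n-b}\bigl(\tilde V_k(n)+\tilde V_{\tilde k}(n)+c\bigr),\qquad \tilde V_k(n)=a\sin 2\pi\theta(n),\quad \tilde V_{\tilde k}(n)=a\sin 2\pi\tilde\theta(n),$$
which is well posed because \eqref{ctt} expresses $\theta(n+1)$ through $\theta(n)$ and $V(n)$, and $V(n)$ through $\theta(n),\tilde\theta(n)$, so the triple $V,\theta,\tilde\theta$ is generated forward in $n$. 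Taking $a$ a large absolute constant, $c=100\,a$, and $n_0-b\ge K_2(E,A)$, one gets $|V(n)|\le \frac{K_1(E,A)}{n-b}$ (this is \eqref{vxc}) and $|V(n)/\sin\pi k|<\frac1{10}$, $|V(n)/\sin\pi k_j|<\frac1{10}$ on $[n_0,\infty)$, so \eqref{lr}, \eqref{bhgx1} and \eqref{bhgx2} apply to all of $\theta,\tilde\theta,\theta_j$.

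Next I set up block recursions for $E$ and $\tilde E$. Fix a large $N=N(E,A)\in\mathbb N$ so that the averages in Lemma \ref{ki} (and the analogous averages at the frequencies $2\pi(2k-1)$) are $\le\varepsilon'$ for a small $\varepsilon'=\varepsilon'(E)$. Summing \eqref{lr} and \eqref{bhgx2} over a block $[n_0+mN,n_0+(m+1)N)$ and using \eqref{bhgx1} to replace $\theta(n_0+mN+\ell)$ by $\theta(n_0+mN)+k\ell$ (and similarly for $\tilde\theta$) up to an error $O\!\bigl(\tfrac{N}{n_0+mN-b}\bigr)$, every product that occurs becomes a trigonometric sum controlled by Lemma \ref{ki}. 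The crucial point is $k+\tilde k=1$: hence $2\pi(\theta(n_0+mN)+k\ell)+2\pi(\tilde\theta(n_0+mN)+\tilde k\ell)\equiv 2\pi\varphi(n_0+mN)\pmod{2\pi}$ with $\varphi:=\theta+\tilde\theta$ the resonant combination, so that, with the above choice of coefficients, the leading terms of the recursions for $\ln R^2$ and for $\ln\tilde R^2$ \emph{coincide} and take exactly the form \eqref{llrnmx} with $F=R$ resp.\ $F=\tilde R$, the \emph{same} $\varphi=\theta+\tilde\theta$, and $K=aN$; meanwhile the equation for $\varphi$ becomes \eqref{ttnmx} with $L=N$ and the ``$100$'' produced precisely by the drift $\sim\frac{c}{(n_0+mN-b)\pi\sin\pi k}\cdot\frac N2$ that the extra term $\frac{c}{n-b}$ feeds into each of $\theta$ and $\tilde\theta$ (without it $\varphi$ could be stuck where $1-\cos 2\pi\varphi=0$ and neither amplitude need decay). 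The error terms $\delta(m),\varepsilon(m)$ are composed of the Lemma \ref{ki} errors ($O(\varepsilon')$), the block-freezing errors ($O(N/(n_0-b))$) and the $O(V^2)$ errors; choosing first $\varepsilon'$ small, then $N$, then $n_0-b\ge K_2(E,A)$ makes $|\delta(m)|\le\sin\pi k/10^4$, $|\varepsilon(m)|<1$, and $K=aN>10^8\pi N$.

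Applying Lemma \ref{zhongyaoyinli} with $F=R$ and then with $F=\tilde R$ (same $\varphi$) gives $R(n_0+mN)\le 1.5\,R(n_0)\bigl(\tfrac{n_0-b+mN}{n_0-b}\bigr)^{-100}$ and the analogue for $\tilde R$; filling in $n$ inside a block via the one-step bound from \eqref{lr} (which changes $\ln R^2$ by $O(N/(n-b))$ over a block) upgrades this to \eqref{R-100} and \eqref{Rtilde-100} with constant $2$. For the bystanders $E_j$, let $u$ solve $(\Delta+V)u=E_ju$ and sum \eqref{lr}:
$$\ln R_j(n)^2-\ln R_j(n_0)^2=-\frac1{\sin\pi k_j}\sum_{\ell=n_0}^{n-1}\frac{a\sin2\pi\theta(\ell)+a\sin2\pi\tilde\theta(\ell)+c}{\ell-b}\sin2\pi\theta_j(\ell)+O\!\Bigl(\tfrac1{n_0-b}\Bigr).$$
Writing $\theta(\ell)=k\ell+\omega(\ell)$, etc., the phases $\omega,\tilde\omega,\omega_j$ vary slowly ($|\omega(\ell+1)-\omega(\ell)|=O(1/(\ell-b))$ by \eqref{bhgx1}), so the sum is a finite combination of sums $\sum_\ell\frac{b_\ell}{\ell-b}e^{2\pi i\mu\ell}$ with $\mu\in\{k\pm k_j,\ \tilde k\pm k_j,\ k_j\}$ and $|b_{\ell+1}-b_\ell|=O(1/(\ell-b))$; since $E_j\notin\{E,\tilde E\}$ and $E_j\in(-2,2)$, each such $\mu$ is bounded away from $\mathbb Z$, so $\bigl|\sum_{\ell\le M}e^{2\pi i\mu\ell}\bigr|\le C_\mu$ and summation by parts bounds each sum by $O(C_\mu/(n_0-b))$, uniformly in $n$ and in $u$. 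Hence for $n_0-b\ge K_2(E,A)$ (now also dominating the finitely many $C_\mu$) one has $|\ln R_j(n)^2-\ln R_j(n_0)^2|\le \ln 4$, which is \eqref{RjC}.

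The main obstacle is the second step: recognizing that the object to feed into Lemma \ref{zhongyaoyinli} is $\varphi=\theta+\tilde\theta$ (not $\theta$ or $\tilde\theta$, which individually drift by $O(\log n)$), choosing $\tilde V_k,\tilde V_{\tilde k}$ and the level $c$ so that the $R$- and $\tilde R$-recursions become the \emph{same} recursion in this $\varphi$ with $\varphi$ itself drifting at the precise rate required by \eqref{ttnmx}, and then arranging the order of quantifiers ($\varepsilon'$, then $N$, then $n_0-b$) so that the Lemma \ref{ki} errors, the block-freezing errors and the $O(V^2)$ errors all clear the quantitative thresholds of Lemma \ref{zhongyaoyinli}. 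The $E_j$ estimate is comparatively routine, the only subtlety being that a block-by-block bound is not summable, so one must sum over all $n$ and exploit $E_j\neq\pm E$ through Dirichlet's test.
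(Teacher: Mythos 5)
Your main argument for $E\neq 0$ follows the paper's proof essentially verbatim: the implicit definition of $V$ (the paper's \eqref{vc1}, with your $a=K_1$, $c=100K_1$), the identification of the resonant combination $\varphi=\theta+\tilde\theta$ as the object to feed into Lemma~\ref{zhongyaoyinli}, the role of the additive $\frac{c}{n-b}$ in making $\varphi$ drift at the prescribed rate, the block-averaging via Lemma~\ref{ki}, and the deduction of \eqref{R-100}, \eqref{Rtilde-100}. Your summation-by-parts bound for the bystander energies $E_j$ is a reformulation of the paper's Lemma~\ref{keyle} (the paper telescopes via multiplication by $e^{2\pi ik}-1$; you write $\theta(\ell)=k\ell+\omega(\ell)$ and invoke Dirichlet's test; these are the same Abel-summation argument), and the identification of the forbidden frequencies $\mu\in\{k\pm k_j,\ \tilde k\pm k_j,\ k_j\}$ being bounded away from $\mathbb Z$ is correctly tied to $E_j\notin\{E,\tilde E\}$ and $A$ being finite.

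There is, however, one genuine error: the parenthetical claim that in the case $E=0$ (so $k=\tfrac12$) ``a single Wigner--von Neumann term $V(n)=\frac{a}{n-b}\sin 2\pi\theta(n)$ does the job.'' It does not, and the paper's formula \eqref{vc2} keeps the $+100$ for exactly this reason. With $k=\tfrac12$ the free Pr\"ufer angle advances by $1$ over two steps (an integer), so there is no intrinsic drift of $\theta\bmod 1$; the two-step recursion coming from $V(n)=\frac{a}{n-b}\sin 2\pi\theta(n)$ gives $\theta(n+2)-\theta(n)-1\approx-\frac{a}{2\pi(n-b)}\sin 4\pi\theta(n)$, which has $\theta\equiv 0\ (\mathrm{mod}\ \tfrac12)$ as attracting equilibria, precisely where $1-\cos 4\pi\theta=0$. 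Concretely, $\theta_0=0$ forces $V\equiv 0$ and $R(n)\equiv R(n_0)$, and nearby $\theta_0$ are dragged to the equilibrium before $R$ can decay, so \eqref{R-100} fails. The correct treatment is identical in spirit to the resonant case: take $V$ as in \eqref{vc2}, so that the $+100$ supplies the drift needed to run Lemma~\ref{zhongyaoyinli} with $\varphi=\theta$ and $L=1$. This is a one-line fix, but the conceptual point---that the extra $\frac{100}{n-b}$ is needed whenever the relevant Pr\"ufer angle advances by an integer over a period, which is the common mechanism behind both the $E=-\tilde E$ resonance and the $E=0$ case---is exactly the paper's main novelty, so it is worth getting right.
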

	
The proof of the following lemma is similar to that of  ~\cite[Lemma 3.1]{liuabsence}. For readers' convenience, we include a proof.
\begin{lemma}\label{keyle}
	Let $E$ and $A=\{E_j\}_{j=1}^m$ satisfy the same assumptions as in Proposition \ref{prop}. Assume for $n>b$,
	\begin{align}
	\abs{V(n)}\leq \frac{K}{n-b}.\nonumber
	\end{align}
	Then  we have for any $n>n_0>b$,
	\begin{align}\label{thetan}
	\abs{\sum_{l=n_0}^{n}\frac{\sin2\pi\theta(l)}{l-b}}\leq \frac{C(E,K)}{n_0-b},
	\end{align}
	and for any $j=1,2,\cdots,m$
	\begin{align}\label{thetatheta1n}
	\abs{\sum_{l=n_0}^{n}\frac{\sin2\pi\theta(l)\sin2\pi\theta_j(l)}{l-b}}\leq \frac{C(E,A,K)}{n_0-b}.
	\end{align}
\end{lemma}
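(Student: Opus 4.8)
The plan is to estimate the oscillatory sums by summation by parts, exploiting the near-periodicity of the Prüfer angle $\theta(l)$. The starting point is the estimate \eqref{bhgx1}: since $|V(l)/\sin\pi k|\le \tfrac{K}{(l-b)|\sin\pi k|}$, for $l$ large we have $\theta(l+1)=\theta(l)+k+O\!\big(\tfrac{1}{l-b}\big)$, so along a block of length $N$ (the $N$ from Lemma \ref{ki}, chosen for the given $k$ and a small $\varepsilon$) the angle advances by approximately $Nk$ with an error that is summably small. Write $\theta(l)=\theta(n_0)+(l-n_0)k + \eta(l)$ where $\eta(l)=\sum_{j=n_0}^{l-1}\big(\theta(j+1)-\theta(j)-k\big)$ satisfies $|\eta(l)-\eta(l')|\le C\sum_{j=\min}^{\max}\tfrac{1}{j-b}$; in particular $\eta$ varies by $O\!\big(\tfrac{1}{n_0-b}\big)$ over any single block. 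The key point is that $k\neq\tfrac12$ (this is where the hypothesis $E\neq -E$, i.e. non-resonance between $E$ and itself, is used — note $2\cos\pi k$ forces $k\ne\tfrac12$ automatically since $E\ne 0$ would be needed, but in fact for $\sin2\pi\theta$ with a single angle we only need $k\neq 0,\tfrac12,1$, all guaranteed), so Lemma \ref{ki} with $\nu=2$ applies: block sums of $\sin2\pi\theta(l)$ over $N$ consecutive indices, with the linear-in-$l$ part of the phase having slope $2\pi k$, are bounded by $N\varepsilon$ up to the error from freezing the coefficient $\tfrac{1}{l-b}$ and from $\eta$.

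Concretely, I would group $\sum_{l=n_0}^{n}\tfrac{\sin2\pi\theta(l)}{l-b}$ into consecutive blocks $B_q=\{n_0+qN,\dots,n_0+(q+1)N-1\}$. On each block, replace $\tfrac{1}{l-b}$ by its value $\tfrac{1}{n_0+qN-b}$ at the left endpoint, incurring a total error $\sum_q N\cdot \tfrac{N}{(n_0+qN-b)^2}=O\!\big(\tfrac{1}{n_0-b}\big)$ after summing the telescoping-type series. On the frozen block, $\sum_{l\in B_q}\sin2\pi\theta(l)=\sum_{l\in B_q}\sin\!\big(2\pi\theta(n_0+qN)+2\pi k(l-n_0-qN)+2\pi\eta_q(l)\big)$ where $\eta_q$ oscillates by $O\!\big(\tfrac{1}{n_0+qN-b}\big)$ across the block; expanding $\sin$ and using $|\sin x - \sin y|\le|x-y|$ reduces this to the pure sum $\sum_{l\in B_q}\sin(\alpha_q+2\pi k(l-n_0-qN))$ plus an $O(N\cdot\tfrac{N}{n_0+qN-b})$ error, and Lemma \ref{ki} bounds the pure sum by $N\varepsilon$. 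Thus the total is $\le \varepsilon\sum_q \tfrac{N}{n_0+qN-b} + O\!\big(\tfrac1{n_0-b}\big)$ — but here $\sum_q \tfrac{N}{n_0+qN-b}$ diverges logarithmically, so a crude blockwise bound is not enough; I must instead do summation by parts in $q$, writing $\sum_q \tfrac{S_q}{n_0+qN-b}$ with $S_q=\sum_{l\in B_q}\sin2\pi\theta(l)$, and use that the partial sums $\sum_{q\le Q} S_q = \sum_{l=n_0}^{n_0+(Q+1)N-1}\sin2\pi\theta(l)$ are themselves bounded — and this is precisely the cancellation furnished by Lemma \ref{ki}, which controls not just individual blocks but, after the same freezing-and-$\eta$-expansion argument applied to a long run, partial sums of length $MN$ by $O(N)$ uniformly in $M$ (the $\tfrac1{l-b}$ weight being absent here). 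Then Abel summation gives $|\sum_{l=n_0}^n \tfrac{\sin2\pi\theta(l)}{l-b}| \le \sup_Q|\sum_{q\le Q}S_q|\cdot\big(\tfrac{1}{n_0-b}+\sum_q|\tfrac{1}{n_0+qN-b}-\tfrac1{n_0+(q+1)N-b}|\big)=O\!\big(\tfrac1{n_0-b}\big)$, which is \eqref{thetan}.

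For \eqref{thetatheta1n}, write $\sin2\pi\theta(l)\sin2\pi\theta_j(l)=\tfrac12\big[\cos2\pi(\theta(l)-\theta_j(l))-\cos2\pi(\theta(l)+\theta_j(l))\big]$. Both $\theta(l)$ and $\theta_j(l)$ advance at rates $k$ and $k_j$ respectively with $O(\tfrac1{l-b})$ errors, so $\theta(l)\pm\theta_j(l)$ advances at rate $k\pm k_j$; since $E\neq E_j$ and $E\neq -E_j=\widetilde{E_j}$ (wait — $E_j\in A$ but $E,\tilde E\notin A$; still I must rule out $k\pm k_j\in\Z$), the relevant frequencies $k+k_j$ and $k-k_j$ are not integers — $k-k_j\in\Z$ would force $k=k_j$ hence $E=E_j$, excluded; $k+k_j\in\Z$ with $k,k_j\in(0,1)$ forces $k+k_j=1$ i.e. $E_j=-E=\tilde E$, excluded since $\tilde E\notin A$. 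So Lemma \ref{ki} applies to each cosine (with $\nu=2$ and frequency $2\pi(k\pm k_j)$), and the identical summation-by-parts scheme yields the $\tfrac{C(E,A,K)}{n_0-b}$ bound. The main obstacle is the one flagged above: a naive term-by-term or blockwise estimate loses because of the harmonic weight, so the argument genuinely needs the \emph{uniform boundedness of long partial sums} of the oscillatory factor (a consequence of Lemma \ref{ki} after controlling the drift $\eta$) fed into Abel summation; getting the errors from freezing $\tfrac1{l-b}$ and from the accumulated angle drift $\eta$ to sum to $O(\tfrac1{n_0-b})$ rather than $O(\tfrac{\log}{n_0-b})$ is the delicate bookkeeping.
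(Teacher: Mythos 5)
Your proposal takes a genuinely different route from the paper, and in its present form it has a real gap. The paper's proof never invokes Lemma~\ref{ki} at all. Instead it multiplies the weighted exponential sum by the nonzero factor $e^{2\pi i k}-1$ (valid for every $k\in(0,1)$, including $k=\tfrac12$), shifts the index, and absorbs everything — the boundary terms, the telescoping of $\tfrac1{l-b}-\tfrac1{l+1-b}$, and the drift error $\theta(l+1)-\theta(l)-k=O\bigl(\tfrac{1}{l-b}\bigr)$ from \eqref{bhgx1} — into a single $O\bigl(\sum_{l\ge n_0}\tfrac1{(l-b)^2}\bigr)=O\bigl(\tfrac1{n_0-b}\bigr)$ error. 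This is a one-layer Abel argument applied directly to the weighted sum, and it finishes in a few lines.

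The gap in your approach is the claim that ``partial sums of length $MN$'' of $\sin 2\pi\theta(l)$ are ``$O(N)$ uniformly in $M$'' and that this is ``a consequence of Lemma~\ref{ki} after controlling the drift $\eta$.'' Neither part holds. Lemma~\ref{ki} controls only the \emph{average over a single block} of fixed length $N$; iterating it over $M$ blocks gives $MN\varepsilon$, not a uniform bound, and the freezing of the drift is only valid on one block at a time (over a long run $\eta$ changes by $O(\log)$, so you cannot freeze it). To bound the unweighted partial sums you must instead do a second Abel summation, splitting $e^{2\pi i\theta(l)}$ into the pure geometric factor $e^{2\pi i(\theta_0+kl)}$ times the slowly varying weight $e^{2\pi i\eta(l)}$; the geometric partial sums are bounded by $2/|e^{2\pi ik}-1|$, but the total variation of $e^{2\pi i\eta(l)}$ over $[n_0,n]$ is $\sum_l|\eta(l+1)-\eta(l)|=O\bigl(\log\tfrac{n-b}{n_0-b}\bigr)$, so the best you get is $|A_l|=O\bigl(\log\tfrac{l-b}{n_0-b}\bigr)$ — your ``uniform $O(N)$'' is simply false. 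The good news, which your proposal frames as an unresolved worry, is that the logarithmic bound is already enough: feeding $|A_l|=O\bigl(\log\tfrac{l-b}{n_0-b}\bigr)$ into the outer Abel summation against $\tfrac1{l-b}-\tfrac1{l+1-b}=O\bigl(\tfrac1{(l-b)^2}\bigr)$ yields $\sum_{l\ge n_0}\tfrac{\log((l-b)/(n_0-b))}{(l-b)^2}=O\bigl(\tfrac1{n_0-b}\bigr)$, since $\int_1^\infty\tfrac{\log t}{t^2}\,dt<\infty$. So your two-layer plan can be salvaged, but not via Lemma~\ref{ki}, and not via the uniform-boundedness claim; the paper's single multiplication by $e^{2\pi ik}-1$ bypasses all of this, and in addition handles $k=\tfrac12$ without any special pleading (your aside about $k\neq\tfrac12$ is both confused and, were Lemma~\ref{ki} really needed, a genuine obstruction, since Lemma~\ref{ki} explicitly excludes $k=\tfrac12$).
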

\begin{proof}[\textbf{Proof.}]
	Let us first show \eqref{thetan}.  It suffices  to prove 
	\begin{align}
	\abs{	\sum_{l=n_0}^{n}\frac{e^{2\pi i\theta(l)}}{l-b}}\leq \frac{C(E,K)}{n_0-b}.\nonumber
	\end{align}
	Indeed, 
	\begin{align}
	&\abs{(e^{2\pi k i}-1)\sum_{l=n_0}^{n}\frac{e^{2\pi i\theta(l)}}{l-b}}
	=\abs{\sum_{l=n_0}^{n}\frac{e^{2\pi i(\theta(l)+k)}}{l-b}-\sum_{l=n_0}^{n}\frac{e^{2\pi i\theta(l)}}{l-b}}\nonumber\\
	=&\abs{\sum_{l=n_0}^{n}\frac{e^{2\pi i(\theta(l)+k)}}{l-b}-\sum_{l=n_0}^{n}\frac{e^{2\pi i\theta(l+1)}}{l-b}+\sum_{l=n_0}^{n}\frac{e^{2\pi i\theta(l+1)}}{l-b}-\sum_{l=n_0}^{n}\frac{e^{2\pi i\theta(l)}}{l-b}}\nonumber\\
	\leq&\abs{\sum_{l=n_0}^{n}\frac{e^{2\pi i(\theta(l)+k)}}{l-b}-\sum_{l=n_0}^{n}\frac{e^{2\pi i\theta(l+1)}}{l-b}}+\abs{\sum_{l=n_0}^{n}\frac{e^{2\pi i\theta(l+1)}}{l-b}-\sum_{l=n_0}^{n}\frac{e^{2\pi i\theta(l)}}{l-b}}\nonumber\\
\leq &\sum_{l=n_0}^{n}\abs{\frac{1-e^{2\pi i(\theta(l+1)-\theta(l)-k)}}{l-b}}+\frac{2}{n_0-b}+\sum_{l=n_0}^{n-1}\abs{\frac{e^{2\pi i\theta(l+1)}}{l-b}-\frac{e^{2\pi i\theta(l+1)}}{l+1-b}}\nonumber\\
= &\sum_{l=n_0}^{n}\abs{\frac{1-e^{2\pi i(\theta(l+1)-\theta(l)-k)}}{l-b}}+\frac{2}{n_0-b}+\sum_{l=n_0}^{n-1}\frac{1}{(l-b)(l+1-b)}.	\label{OKx}
\end{align}
 By  \eqref{bhgx1}, one has that 
\begin{align}
	\theta(l+1)-\theta(l)-k=\frac{O(K)}{(l-b)\sin\pi k}.\label{gj267}
\end{align}
By \eqref{OKx} and \eqref{gj267}, we have
\begin{align}
	\abs{(e^{2\pi k i}-1)\sum_{l=n_0}^{n}\frac{e^{2\pi i\theta(l)}}{l-b}}\label{OK}\leq &\frac{2}{n_0-b}+O\left(\frac{K}{\sin\pi k}\right)\sum_{l=n_0}^{\infty}\frac{1}{(l-b)^2}+\sum_{l=n_0}^{\infty}\frac{1}{(l-b)^2}\\
	\leq &\frac{C(E,K)}{n_0-b}.\nonumber
	\end{align}

	
	Clearly,  $$\sin2\pi\theta(l)\sin2\pi\theta_j(l)=\frac{\cos2\pi(\theta(l)-\theta_j(l))-\cos2\pi(\theta(l)+\theta_j(l))}{2}.$$
	By changing $k,\theta(l)$ in the proof of \eqref{thetan} to $k\pm k_j,\theta(l)\pm\theta_j(l)$,  we have \eqref{thetatheta1n}.
\end{proof}

	For simplicity, denote by $K_2=K_2(E,A)$ and $K_1=K_1(E,A)$, we require that 
	$$K_2\gg K_1>0.$$
	
	For $E\neq 0$, we solve the following   system of two equations  for $\theta(n)$ and $\tilde{\theta}(n)$ on $[n_0,\infty)$ with the initial condition $(\theta(n_0),\tilde{\theta}(n_0))=(\theta_0,\tilde{\theta}_0)$ 
\begin{align}
	\begin{cases}\label{es}
		\cot(\pi\theta(n+1)-\pi k)=\cot\pi\theta(n)-\frac{V(n)}{\sin\pi k},\\
		\cot(\pi\tilde{\theta}(n+1)-\pi \tilde{k})=\cot\pi\tilde{\theta}(n)-\frac{V(n)}{\sin\pi \tilde{k}},
	\end{cases}
\end{align}
where 
\begin{align}\label{vc1}
	V(n)=K_1(E,A)\frac{\sin2\pi\theta(n)+\sin2\pi\tilde{\theta}(n)+100}{n-b}.
\end{align}

For $E=0$, instead of solving \eqref{es}, we solve
\begin{align}\label{es1}
	\cot(\pi\theta(n+1)-\pi k)=\cot\pi\theta(n)-\frac{V(n)}{\sin\pi k}
\end{align}
with the initial condition $\theta(n_0)=\theta_0$, where
\begin{align}\label{vc2}
	V(n)=K_1(E,A)\frac{\sin2\pi\theta(n)+100}{n-b}.
\end{align}

\begin{proof}[\textbf{Proof of Proposition \ref{prop} }]
	Substitute \eqref{vc1}  into \eqref{es} (or  \eqref{vc2}  into \eqref{es1}).  Then by solving \eqref{es} (or \eqref{es1}), we obtain $ \theta(n)$ and $\tilde{\theta}(n)$ (or $\theta(n)$) for $n\geq n_0$. For $n\geq n_0$, define $V(n)$ as \eqref{vc1} (or  \eqref{vc2} ).   We finish the construction of $V(n)$. Now we are in the position to prove 
 \eqref{R-100},  \eqref{Rtilde-100} and \eqref{RjC}.
	
\begin{description}
	\item[\textbf{Case 1. \textbf{ $E\neq 0$}}, namely $k\neq \frac{1}{2}$ ] 
\end{description}
Choose a fixed large enough $N$, which will be determined later.

	By \eqref{lr} and \eqref{vc1}, one has
	\begin{align}
		&\ln R(n+N)^2\nonumber\\
		=&\ln R(n)^2-\sum_{l=0}^{N-1}\frac{V(n+l)}{\sin\pi k}\sin2\pi\theta(n+l)+\frac{O(1)}{(n-b)^2}\nonumber\\
		=&\ln R(n)^2-K_1\sum_{l=0}^{N-1}\frac{\sin2\pi\theta(n+l)+\sin2\pi\tilde{\theta}(n+l)+100}{(n-b)\sin\pi k}\sin2\pi\theta(n+l)\nonumber\\
		&+\frac{O(1)}{(n-b)^2}.\label{gj268}
	\end{align}
 
	By \eqref{bhgx1} and \eqref{vc1}, one has that for any $l$ with $0\leq l\leq N-1$, 
	\begin{align}
	\sin2\pi\theta(n+l)&=\sin2\pi\left(\theta(n)+kl+\frac{O(1)}{1+n-b}\right)\nonumber\\&=\sin(2\pi\theta(n)+2\pi kl)+\frac{O(1)}{1+n-b},\label{gj269}
	\end{align}
	and ($k=1-\tilde{k}$)
	\begin{align}
	\sin2\pi\tilde{\theta}(n+l)&=\sin2\pi\left(\tilde\theta(n)+\tilde{k}l+\frac{O(1)}{1+n-b}\right
	)\nonumber\\&=\sin(2\pi\tilde{\theta}(n)-2\pi kl)+\frac{O(1)}{1+n-b}.\label{gj2610}
	\end{align}
 
By trigonometric identities, \eqref{gj268}, \eqref{gj269} and \eqref{gj2610}, we have that 
\begin{align}
	&\ln R(n+N)^2\nonumber\\
	=&\ln R(n)^2-K_1\sum_{l=0}^{N-1}\frac{1-\cos\left(4\pi\theta(n)+4\pi kl\right)}{2(n-b)\sin\pi k}-100K_1\sum_{l=0}^{N-1}\frac{\sin\left(2\pi\theta(n)+2\pi kl\right)}{(n-b)\sin\pi k}\nonumber\\
	&-K_1\sum_{l=0}^{N-1}\frac{\cos\left(2\pi\tilde{\theta}(n)-2\pi\theta(n)-4\pi kl\right)-\cos\left(2\pi\tilde{\theta}(n)+2\pi\theta(n)\right)}{2(n-b)\sin\pi k}+\frac{O(1)}{(n-b)^2},\nonumber
\end{align}
where $O(1)$ depends on $E$  and $A$ through $N$ and $K_1(E,A)$.

Applying Lemma \ref{ki} with large enough $N$, and using  the fact that  $n-b\geq K_2 $, one has that
there exists  $\abs{\delta(n)}\leq \frac{\sin\pi k}{10^4}$ such that 
\begin{align}\label{y1}
	\ln R(n+N)^2=\ln R(n)^2-\frac{K_1N}{2(n-b)\sin\pi k}(1-\cos2\pi\varphi(n)+\delta(n)),
\end{align}
where $\varphi(n)=\tilde{\theta}(n)+\theta(n)$.

By \eqref{bhgx2},  \eqref{vc1} and Lemma \ref{ki} (with trigonometric identities),   we have that there exists $\abs{\varepsilon(n)}<1$ such that
\begin{align}
	&\varphi(n+N)\nonumber\\
	=&\varphi(n)+N+\sum_{l=0}^{N-1}(\sin^2\pi\theta(n+l)+\sin^2\pi\tilde{\theta}(n+l))\frac{V(n+l)}{\pi\sin\pi k}+\frac{O(1)}{(n-b)^2}\nonumber\\
		=&\varphi(n)+N+\frac{O(1)}{(n-b)^2}\nonumber\\
		&
		+K_1\sum_{l=0}^{N-1}(\sin^2\pi\theta(n+l)+\sin^2\pi\tilde{\theta}(n+l))\frac{\sin2\pi\theta(n+l)+\sin2\pi\tilde{\theta}(n+l)+100}{(n-b) \pi\sin\pi k}\nonumber\\
			=&\varphi(n)+N+\frac{O(1)}{(n-b)^2}+100 K_1\sum_{l=0}^{N-1} \frac{1}{(n-b)\pi \sin\pi k}\nonumber\\
				&
			+K_1\sum_{l=0}^{N-1}\frac{\sin2\pi\theta(n+l)+\sin2\pi\tilde{\theta}(n+l)}{(n-b) \pi\sin\pi k}\nonumber\\
		&
		-\frac{K_1}{2}\sum_{l=0}^{N-1}(\cos 2\pi\theta(n+l)+\cos 2\pi\tilde{\theta}(n+l))\frac{\sin2\pi\theta(n+l)+\sin2\pi\tilde{\theta}(n+l)+100}{(n-b) \pi\sin\pi k}\nonumber\\
	\label{y2}=&\varphi(n)+N+\frac{K_1N}{(n-b)\pi\sin\pi k}\left(100+\varepsilon(n)\right).
\end{align}
By \eqref{y1}, \eqref{y2} and applying Lemma \ref{zhongyaoyinli} with $L=N$, one has that  for $n=n_0+mN$, $m=1,2,\cdots$, 
\begin{align}\label{R-1001}
R(n)\leq 1.5\left(\frac{n-b}{n_0-b}\right)^{-100}R(n_0)
\end{align}
   By \eqref{lr} and the fact that $n_0-b$ is large enough,  \eqref{R-1001} implies  for any  $l$ with $ 0\leq l\leq N-1 $ and $n=n_0+m N+l$,
   \begin{align}\label{R-1002}
   R(n)\leq 2\left(\frac{n-b}{n_0-b}\right)^{-100}R(n_0).
   \end{align}
   This implies \eqref{R-100}. 
The proof of  \eqref{Rtilde-100}  follows that of  \eqref{R-100} step by step. We omit the details.

By  \eqref{lr} and \eqref{vc1}, one obtains
\begin{align}
\ln R_j(n)^2&\leq \ln R_j(n_0)^2-\sum_{l=n_0}^{n-1}\frac{V(l)}{\sin\pi k_j}\sin2\pi\theta_j(l)+\sum_{l=n_0}^{n-1}\frac{O(1)}{(l-b)^2}\nonumber\\
&=\ln R_j(n_0)^2+\frac{O(1)}{n_0-b}-\frac{K_1}{\sin\pi k_j}\sum_{l=n_0}^{n-1}\frac{\sin2\pi\theta(l)+\sin2\pi\tilde{\theta}(l)+100}{l-b}\sin2\pi\theta_j(l).\nonumber\\
&=\ln R_j(n_0)^2+\frac{O(1)}{n_0-b}, \label{g1}
\end{align}
where the last inequality holds by Lemma \ref{keyle}. 
Therefore, for large $n_0-b$, we have \eqref{RjC}.




\begin{description}
	\item[\textbf{Case 2. \textbf{$k=\frac{1}{2}$ ($E= 0$)}}] 
\end{description}

Since $\sin \pi k=1,$ by \eqref{lr}, \eqref{bhgx2} and \eqref{vc2}, one has 
\begin{align}\label{E0R}
	\ln R(n+2)^2=&\ln R(n)^2-\sum_{j=0}^1V(n+j)\sin2\pi\theta(n+j)+\frac{O(1)}{(n-b)^2}\nonumber\\
	=&\ln R(n)^2-\frac{K_1}{n-b}(1-\cos4\pi\theta(n))+\frac{O(1)}{(n-b)^2},\nonumber\\
		=&\ln R(n)^2-\frac{K_1}{n-b}(1-\cos4\pi\theta(n)+\frac{O(1)}{n-b})
\end{align}
and 
\begin{align}\label{E0theta}
	\theta(n+2)=\theta(n)+1+\frac{K_1}{\pi(n-b)}\left(100-\frac{1}{2}\sin4\pi\theta(n)+\frac{O(1)}{n-b}\right).
\end{align}
By \eqref{E0R},  \eqref{E0theta}  and applying Lemma \ref{zhongyaoyinli} with  $\varphi(n)=\theta(n)$ and $L=1$, we have   \eqref{R-100}.

In this case, the proof of \eqref{RjC} is similar to the proof in Case 1.

\end{proof}

\begin{proof}[\textbf{Proof of Theorem \ref{theorem1} and Theorem \ref{theorem2}}]
As we already mentioned in the introduction, once we  have the generating functions (Proposition \ref{prop}) at hand, proofs of Theorems \ref{theorem1} and \ref{theorem2} follow from the standard  construction   first introduced in \cite{JL19} and further developed in  \cite{LO17,L19stark}. We only give an outline of the  proof here. 

Let $\{N_r\}_{r\in\N}$ be a non-decreasing sequence. 
In the construction of  Theorem \ref{theorem1},  $N_r=M$ for   sufficiently large $r$, where $M$ is the number of all  non-resonant eigenvalues being constructed. 
In  Theorem \ref{theorem2},   $N_r$ goes to infinity arbitrarily slowly. We further assume  $N_{r+1}=N_r+1$ when $N_{r+1}>N_r$. 
The construction is proceeded by inductions. Suppose we construct the potential on $ [0,J_{r-1}]$ before $r-1$ steps. 
At the $r$th step,   we take $N_r$  (non-resonant)  eigenvalues into consideration.  
Our  strategy is to construct  the potential  on $[J_{r-1},J_{r}]=\cup_{l=1}^{N_r}[J_{r-1}+(l-1)T_r,J_{r-1}+lT_r]$  in a piecewise manner (namely at $r$th step, we construct $N_r$ pieces of the potential of size $T_r$ starting at $J_{r-1}$ and ending  at  $J_r=J_{r-1}+N_rT_r$).
For each piece $[J_{r-1}+(l-1)T_r,J_{r-1}+lT_r]$, we apply  Proposition \ref{prop} with $E$ being one eigenvalue (in total we have $N_r$ choices so we obtain $N_r$ pieces) and $A$ being the rest of eigenvalues. 

The main difficulty is to  control  the size of each piece $T_r$. The construction in \cite{JL19,LO17,L19stark} only uses inequalities \eqref{vxc}, \eqref{R-100}, \eqref{Rtilde-100} and \eqref{RjC} to obtain appropriate   $T_r$ and $N_r$.
Therefore Theorems \ref{theorem1} and \ref{theorem2}  follow from  Proposition \ref{prop}. 

\end{proof}

		\section*{Acknowledgments}
	The authors wish to express their gratitude to the anonymous referee, whose comments helped the exposition of this paper. 
	W. Liu was supported by NSF DMS-2000345 and DMS-2052572. K. Lyu was supported by the National Natural Science Foundation of China (11871031).


\begin{thebibliography}{10}
	
	\bibitem{sim07}
	{\sc Denisov, S.~A., and Kiselev, A.}
	\newblock Spectral properties of {S}chr\"odinger operators with decaying
	potentials.
	\newblock In {\em Spectral theory and mathematical physics: a {F}estschrift in
		honor of {B}arry {S}imon's 60th birthday}, vol.~76 of {\em Proc. Sympos. Pure
		Math.} Amer. Math. Soc., Providence, RI, 2007, pp.~565--589.
	
	\bibitem{JanasSimonov}
	{\sc Janas, J., and Simonov, S.}
	\newblock A {W}eyl-{T}itchmarsh type formula for a discrete {S}chr\"{o}dinger
	operator with {W}igner--von {N}eumann potential.
	\newblock {\em Studia Math. 201}, 2 (2010), 167--189.
	
	\bibitem{JL19}
	{\sc Jitomirskaya, S., and Liu, W.}
	\newblock Noncompact complete {R}iemannian manifolds with dense eigenvalues
	embedded in the essential spectrum of the {L}aplacian.
	\newblock {\em Geom. Funct. Anal. 29}, 1 (2019), 238--257.
	
	\bibitem{jnw2}
	{\sc Judge, E., Naboko, S., and Wood, I.}
	\newblock Eigenvalues for perturbed periodic {J}acobi matrices by the
	{W}igner--von {N}eumann approach.
	\newblock {\em Integral Equations Operator Theory 85}, 3 (2016), 427--450.
	
	\bibitem{jnw1}
	{\sc Judge, E., Naboko, S., and Wood, I.}
	\newblock Embedded eigenvalues for perturbed periodic {J}acobi operators using
	a geometric approach.
	\newblock {\em J. Difference Equ. Appl. 24}, 8 (2018), 1247--1272.
	
	\bibitem{jnw3}
	{\sc Judge, E., Naboko, S., and Wood, I.}
	\newblock Spectral results for perturbed periodic {J}acobi matrices using the
	discrete {L}evinson technique.
	\newblock {\em Studia Math. 242}, 2 (2018), 179--215.
	
	\bibitem{Simon1998EFGP}
	{\sc Kiselev, A., Last, Y., and Simon, B.}
	\newblock Modified {P}r\"{u}fer and {EFGP} transforms and the spectral analysis
	of one-dimensional {S}chr\"{o}dinger operators.
	\newblock {\em Comm. Math. Phys. 194}, 1 (1998), 1--45.
	
	\bibitem{KRS}
	{\sc Kiselev, A., Remling, C., and Simon, B.}
	\newblock Effective perturbation methods for one-dimensional {S}chr\"odinger
	operators.
	\newblock {\em J. Differential Equations 151}, 2 (1999), 290--312.
	
	\bibitem{kru}
	{\sc Kr\"uger, H.}
	\newblock On the existence of embedded eigenvalues.
	\newblock {\em J. Math. Anal. Appl. 395}, 2 (2012), 776--787.
	
	\bibitem{liuabsence}
	{\sc Liu, W.}
	\newblock Absence of singular continuous spectrum for perturbed discrete
	{S}chr\"{o}dinger operators.
	\newblock {\em J. Math. Anal. Appl. 472}, 2 (2019), 1420--1429.
	
	\bibitem{liupafa}
	{\sc Liu, W.}
	\newblock The asymptotical behaviour of embedded eigenvalues for perturbed
	periodic operators.
	\newblock {\em Pure Appl. Funct. Anal. 4}, 3 (2019), 589--602.
	
	\bibitem{L19stark}
	{\sc Liu, W.}
	\newblock Criteria for eigenvalues embedded into the absolutely continuous
	spectrum of perturbed {S}tark type operators.
	\newblock {\em J. Funct. Anal. 276}, 9 (2019), 2936--2967.
	
	\bibitem{l1}
	{\sc Liu, W.}
	\newblock Criteria for embedded eigenvalues for discrete {S}chr\"{o}dinger
	operators.
	\newblock {\em Int. Math. Res. Not. IMRN}, 20 (2021), 15803--15832.
	
	\bibitem{liu21}
	{\sc Liu, W.}
	\newblock Revisiting the {C}hrist-{K}iselev's multi-linear operator technique
	and its applications to {S}chr\"{o}dinger operators.
	\newblock {\em Nonlinearity 34}, 3 (2021), 1288--1315.
	
	\bibitem{LO17}
	{\sc Liu, W., and Ong, D.~C.}
	\newblock Sharp spectral transition for eigenvalues embedded into the spectral
	bands of perturbed periodic operators.
	\newblock {\em J. Anal. Math. 141}, 2 (2020), 625--661.
	
	\bibitem{lotoreichik2014spectral}
	{\sc Lotoreichik, V., and Simonov, S.}
	\newblock Spectral analysis of the half-line {K}ronig-{P}enney model with
	{W}igner--{V}on {N}eumann perturbations.
	\newblock {\em Rep. Math. Phys. 74}, 1 (2014), 45--72.
	
	\bibitem{lukwn}
	{\sc Lukic, M.}
	\newblock Schr\"odinger operators with slowly decaying {W}igner-von {N}eumann
	type potentials.
	\newblock {\em J. Spectr. Theory 3}, 2 (2013), 147--169.
	
	\bibitem{luk14}
	{\sc Lukic, M.}
	\newblock A class of {S}chr\"odinger operators with decaying oscillatory
	potentials.
	\newblock {\em Comm. Math. Phys. 326}, 2 (2014), 441--458.
	
	\bibitem{lukd1}
	{\sc Lukic, M., and Ong, D.~C.}
	\newblock Wigner-von {N}eumann type perturbations of periodic {S}chr\"odinger
	operators.
	\newblock {\em Trans. Amer. Math. Soc. 367}, 1 (2015), 707--724.
	
	\bibitem{Na86}
	{\sc Naboko, S.~N.}
	\newblock On the dense point spectrum of {S}chr\"{o}dinger and {D}irac
	operators.
	\newblock {\em Teoret. Mat. Fiz. 68}, 1 (1986), 18--28.
	
	\bibitem{napu}
	{\sc Naboko, S.~N., and Pushnitski\u{\i}, A.~B.}
	\newblock A point spectrum, lying on a continuous spectrum, for weakly
	perturbed operators of {S}tark type.
	\newblock {\em Funktsional. Anal. i Prilozhen. 29}, 4 (1995), 31--44, 95.
	
	\bibitem{naboko1992point}
	{\sc Naboko, S.~N., and Yakovlev, S.~I.}
	\newblock The point spectrum of a discrete {S}chr\"odinger operator.
	\newblock {\em Funktsional. Anal. i Prilozhen. 26}, 2 (1992), 85--88.
	
	\bibitem{remling1998absolutely}
	{\sc Remling, C.}
	\newblock The absolutely continuous spectrum of one-dimensional
	{S}chr\"{o}dinger operators with decaying potentials.
	\newblock {\em Comm. Math. Phys. 193}, 1 (1998), 151--170.
	
	\bibitem{simondense}
	{\sc Simon, B.}
	\newblock Some {S}chr\"{o}dinger operators with dense point spectrum.
	\newblock {\em Proc. Amer. Math. Soc. 125}, 1 (1997), 203--208.
	
	\bibitem{sim12}
	{\sc Simonov, S.}
	\newblock Zeroes of the spectral density of discrete {S}chr\"odinger operator
	with {W}igner-von {N}eumann potential.
	\newblock {\em Integral Equations Operator Theory 73}, 3 (2012), 351--364.
	
\end{thebibliography}
\end{document}